\documentclass[preprint,12pt,authoryear]{elsarticle}
\usepackage[T1]{fontenc}
\usepackage{lmodern,amsmath,amssymb,amsthm,mathtools,booktabs}
\usepackage{algorithm,algpseudocode,microtype,setspace,xurl}
\usepackage[a4paper,margin=25mm]{geometry}
\usepackage[hidelinks]{hyperref}
\newtheorem{theorem}{Theorem}[section]
\newtheorem{lemma}[theorem]{Lemma}
\newtheorem{proposition}[theorem]{Proposition}
\newtheorem{corollary}[theorem]{Corollary}
\theoremstyle{remark}

\newcommand{\OPT}{\operatorname{OPT}}
\newcommand{\V}{\mathcal V}
\newcommand{\A}{\mathcal A}
\newcommand{\Sset}{\mathcal S}
\newcommand{\U}{\widehat U}

\newcommand{\Zplus}{\mathbb Z_{\ge0}}
\DeclareMathOperator*{\argmin}{arg\,min}
\journal{European Journal of Operational Research}
\begin{document}
\begin{frontmatter}
\title{Critical-state algorithms for lot sizing with all-units discounts and variable inventory bounds}
\author{Kleitos Papadopoulos}
\begin{abstract}
We study deterministic single-item lot sizing with all-units quantity discounts,
linear holding costs, and time-varying end-of-period inventory bounds.
Unit prices may vary arbitrarily over time. For one common positive discount
threshold, we give an exact algorithm requiring $O(n^3)$ arithmetic operations
and $O(n^2)$ working space, including recovery of an optimal purchase plan.
The method combines a critical-state theorem in cumulative-purchase coordinates
with two prefix-minimum scans per dynamic-programming layer. The structural
argument accommodates downward price discontinuities and distinguishes
prescribed from free terminal inventory. With a fixed global set of $m$ positive
threshold values, the bounds become $O_m(n^{m+2})$ time and $O_m(n^{m+1})$
space. Period-specific thresholds admit the same exact construction, but its
state set can be exponential. For commensurate flow data, an inventory-lattice
implementation takes $O(PW)$ time, where $P$ is the total number of price tiers
and $W$ is the largest inventory-lattice width. We also derive a conditional
optimality certificate from a monotone one-threshold relaxation; this certificate
does not imply a near-linear worst-case algorithm for arbitrary prices.
Exact-arithmetic implementations are checked against exhaustive inventory
dynamic programs and tier-enumeration linear programs. An illustrative timing
study identifies settings in which the lattice implementation is preferable.
\end{abstract}
\begin{keyword}
Lot sizing \sep Dynamic programming \sep Quantity discounts \sep Inventory bounds \sep Exact algorithms
\end{keyword}
\end{frontmatter}

\section{Introduction}\label{sec:intro}
An all-units quantity discount applies a lower unit price to an entire order once
its quantity reaches a threshold. Unlike an incremental discount, the resulting
procurement cost can decrease discontinuously as the order grows. In a
multi-period setting, a discount therefore creates a trade-off among ordering
more than immediate demand, paying holding costs, and occupying limited storage
space. When prices fluctuate over time, the cheapest inventory remaining in a
particular policy need not identify the cheapest complete history for another
inventory level. Exact methods must compare total costs while preserving both
threshold eligibility and the storage constraints of each history.

This paper considers continuous order quantities, nonnegative deterministic
demands, linear nonnegative holding costs, and arbitrary finite end-of-period
inventory bounds. There are no setup charges, separate production limits,
backlogging, or disposal. The entire horizon is known in advance. Our main
algorithm does not require either unit-price sequence to be monotone. Its
central observation is that an optimal plan can be selected from a finite set
of cumulative-purchase levels obtained by translating inventory-bound values
by sums of discount thresholds. For one common threshold, this set has
$O(n^2)$ elements. The apparent quadratic predecessor search in each state
then reduces to two prefix minima, giving an $O(n^3)$ algorithm.

\subsection{Related literature and positioning}
Dynamic lot sizing has a long algorithmic history beginning with the classical
model of \citet{wagner1958}. Boundary-based descriptions of optimal production
and inventory trajectories appear in \citet{love1973} and
\citet{swoveland1975}. Our proof uses this general extreme-point perspective,
but expresses it directly in cumulative-purchase coordinates for the stated
all-units model.

For piecewise-concave production costs, \citet{koca2014} establish polynomial
solvability with a fixed number of common breakpoints, including an
$O(n^{2m+3})$ bound for their general no-backlogging formulation. Their
extensions also use stock-boundary states. In related piecewise-linear models,
\citet{ou2017} obtain an $O(n^{m+2}\log n)$ algorithm, and
\citet{pappl2026} report an $O(n^{m+2})$ algorithm. The latter work was initially
circulated as arXiv:2403.16314 and has since appeared in the \emph{Journal of
Global Optimization}. Thus the exponent $m+2$, and polynomial solvability for
fixed common breakpoints, are not claimed here to be new in themselves.

\citet{down2021} study a bounded-inventory one-breakpoint special case and give
a quadratic algorithm based on a just-in-time structural property.
\citet{papmono2026} state an $O(n\log n)$ algorithm for the one-breakpoint
case with non-increasing price sequences and variable inventory capacities.
That result is relevant to the conditional certificate in
Section~\ref{sec:certificate}; our main correctness theorem does not rely on it.
For broader piecewise-linear production models,
\citet{shaw1998} develop a pseudopolynomial algorithm. This provides context
for our lattice implementation, without identifying production-capacity and
inventory-capacity formulations.

A recent preprint by \citet{papmonge2026} develops Monge matrix searching for
nonlinear piecewise-concave production costs, using common production
breakpoints and a fixed global collection of stock-boundary levels.
The present method instead uses affine tier costs and arbitrary period-specific
inventory bounds: their cumulative boundary values contribute only $O(n)$
anchors, without assuming a fixed number of distinct capacity values.
We use elementary prefix minima rather than Monge searching. These differences
specify the scope of the contribution; they are not a claim that the present
algorithm dominates every algorithm for the broader models.

\subsection{Results and organization}
We give a self-contained critical-state proof, an implementable cubic
one-threshold algorithm, and a space-efficient recovery procedure. The proof
extends to a fixed global set of threshold values and to unrestricted local
thresholds, with explicit accounting for the resulting state-set sizes. A
second implementation exploits a manageable inventory lattice. Finally, a
historical-price relaxation provides a certified shortcut when a reconstructed
plan attains its lower bound. The shortcut is optional, and a rejected
certificate leaves the exact dynamic program unchanged.

Section~\ref{sec:model} defines the model and its cumulative coordinates.
Section~\ref{sec:single} proves the cubic result.
Section~\ref{sec:multi} treats multiple and period-dependent thresholds.
Section~\ref{sec:lattice} gives the lattice algorithm.
Section~\ref{sec:certificate} develops the monotone-relaxation certificate.
Section~\ref{sec:validation} reports reproducible validation and an illustrative
timing study. Section~\ref{sec:conclusion} discusses scope and further work.
\ref{app:reconstruction} supplies the full reconstruction argument.

\section{Model and preliminary reductions}\label{sec:model}
\subsection{Inventory and procurement}
There are $n\ge1$ periods. Period $t$ has demand $d_t\ge0$, end-of-period
inventory bound $B_t\ge0$, and holding cost $h_t\ge0$ per unit of ending
inventory. All data are finite. An order $x_t\ge0$ and inventory $I_t$ satisfy
\begin{align}
 I_0&=0, & I_t&=I_{t-1}+x_t-d_t, & 0&\le I_t\le B_t
 \qquad(t=1,\ldots,n).\label{eq:inventory}
\end{align}
For the one-threshold model, $Q>0$ is common to all periods and
$0\le p_{2,t}\le p_{1,t}$. The procurement cost is
\begin{equation}
 f_t(x)=\begin{cases}
 p_{1,t}x,&0\le x<Q,\\
 p_{2,t}x,&x\ge Q.
 \end{cases}\label{eq:onecost}
\end{equation}
Both price sequences are otherwise arbitrary. We minimize
\begin{equation}
 C(x)=\sum_{t=1}^n\bigl(f_t(x_t)+h_tI_t\bigr).
 \label{eq:objective}
\end{equation}
Orders are continuous rather than restricted to integer units. In particular,
no rounding of quantities or thresholds is part of the critical-state method.

The bound $B_t$ applies \emph{after} period-$t$ demand has been met. There is no
additional restriction on $I_{t-1}+x_t$. A model whose post-order inventory
cannot exceed $T_t$ is obtained by setting $B_t=T_t-d_t$, provided
$T_t\ge d_t$. Conversely, an end-inventory instance can be represented using
post-order bounds $T_t=B_t+d_t$. An independent upper bound on $x_t$ is a
different constraint and is not included in the main theorem.

We treat two terminal policies: prescribed inventory
$I_n=b$, where $0\le b\le B_n$, and free inventory $I_n\in[0,B_n]$.
These policies cannot be interchanged. For example, with one period,
$d_1=1$, $B_1=4$, $Q=4$, prices $(10,1)$, and no holding cost,
the zero-terminal optimum costs $10$, whereas the free-terminal optimum
orders four units and costs $4$.

\subsection{Cumulative-purchase coordinates}
Define cumulative demand and cumulative purchasing by
\begin{equation}
 D_0=Y_0=0,\qquad D_t=\sum_{s=1}^t d_s,\qquad
 Y_t=\sum_{s=1}^t x_s.\label{eq:cumulative}
\end{equation}
Then $I_t=Y_t-D_t$ and $x_t=Y_t-Y_{t-1}$. Feasibility becomes
\begin{equation}
 Y_0=0,\qquad Y_t\ge Y_{t-1},\qquad D_t\le Y_t\le D_t+B_t.
 \label{eq:corridor-original}
\end{equation}
Set $U_t=D_t+B_t$. Under prescribed terminal inventory, replace $U_n$ by
$D_n+b$ and retain the equality $Y_n=D_n+b$. Under free terminal inventory,
no replacement is made. Put
\begin{equation}
 \U_t=\min_{s=t,\ldots,n}U_s,\qquad H=\U_n,
 \qquad \A=\{0\}\cup\{D_t,\U_t:1\le t\le n\}.
 \label{eq:anchors}
\end{equation}
The number of distinct anchors is at most $2n+1$.

\begin{lemma}[Suffix tightening]\label{lem:tightening}
Replacing the upper bounds in~\eqref{eq:corridor-original} by $\U_t$ preserves
the set of complete feasible plans, under either terminal policy.
\end{lemma}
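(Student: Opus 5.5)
Since $\U_t\le U_t$ for every $t$ (take $s=t$ in the minimum of~\eqref{eq:anchors}), replacing $U_t$ by $\U_t$ can only shrink the feasible set. The plan is therefore to show the reverse inclusion: every complete plan satisfying~\eqref{eq:corridor-original} already satisfies $Y_t\le\U_t$ for all $t$. Throughout, $U_n$ is understood as modified according to the terminal policy, so the argument is identical for both policies. The only policy-specific point is that under the prescribed policy the equality $Y_n=D_n+b=U_n$ is retained, and it is unaffected because $\U_n=U_n$.

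The key step is monotonicity of cumulative purchasing. Let $Y$ be feasible for~\eqref{eq:corridor-original} and fix $t$. For every $s\in\{t,\ldots,n\}$, the constraints $Y_{r}\ge Y_{r-1}$ for $r=t+1,\ldots,s$ chain to give $Y_t\le Y_s$. Feasibility at period $s$ gives $Y_s\le U_s$. Combining these yields $Y_t\le U_s$ for all $s\ge t$, and taking the minimum over $s$ gives $Y_t\le\U_t$.

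The lower bounds $Y_t\ge D_t$, the monotonicity constraints and $Y_0=0$ are untouched by the replacement, so the two feasible sets coincide. There is no real obstacle here. The only care needed is to note that "complete plan" matters: the tightening uses future constraints, so it would not preserve feasibility of partial prefixes. It is also worth remarking that $\U_t$ is nondecreasing in $t$, although this is not needed for the lemma itself.
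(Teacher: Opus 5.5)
Your proof is correct and follows the paper's own argument: $\U_t\le U_t$ gives one inclusion, and the chain $Y_t\le Y_s\le U_s$ for $s\ge t$ (from nonnegative orders), minimized over $s$, gives the other, with the terminal equality simply retained. Your remark that the tightening applies only to complete plans, not to partial prefixes, also matches the paper's comment after the lemma.
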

\begin{proof}
For $s\ge t$, nonnegative orders imply $Y_t\le Y_s\le U_s$.
Taking the minimum over $s$ gives $Y_t\le\U_t$. Conversely,
$\U_t\le U_t$, so every trajectory satisfying the tightened bounds satisfies
the original ones. The terminal equality, when present, is retained.
\end{proof}

All feasible coordinates and anchors lie in $[0,H]$. The array $\U$ is
computed by one backward scan. The tightening is a property of complete
trajectories, not a capacity allowance for modifying an arbitrary earlier
policy. It is legitimate for an offline dynamic program even though some
prefixes excluded by the tightening would be feasible in isolation.

\subsection{Computational model}
We count exact arithmetic operations, comparisons, and array accesses.
The notation $O_m(\cdot)$ permits constants depending on a fixed number $m$
of distinct positive threshold values. It does not denote fixed-parameter
tractability, and $m$ is not allowed to grow silently with $n$.
For rational inputs, the accompanying implementations use exact fractions;
the displayed bounds exclude the bit cost of arithmetic. No constant-time
representation or comparison of arbitrary symbolic real expressions is assumed.

\section{A cubic algorithm for one common threshold}\label{sec:single}
\subsection{An eligible-line representation}
The discount discontinuity can be handled without open feasible regions.
For every $x\ge0$,
\begin{equation}
 f_t(x)=\min\left\{p_{1,t}x,\quad
       p_{2,t}x\;\text{subject to }x\ge Q\right\}.
 \label{eq:eligible-one}
\end{equation}
An ineligible term has value $+\infty$. The regular-price line may be used
above $Q$ in this representation because it is then dominated by the
discounted-price line for the same order. The lower threshold of the latter
line cannot be dropped.

For each vector $z\in\{0,1\}^n$, let $\mathcal P_z$ be the tightened
inventory polytope with additional inequalities
\begin{equation}
 Y_t-Y_{t-1}\ge z_tQ.\label{eq:tier-lp}
\end{equation}
Minimize on $\mathcal P_z$ the affine objective
\begin{equation}
 L_z(Y)=\sum_{t=1}^n\left[
 ((1-z_t)p_{1,t}+z_tp_{2,t})(Y_t-Y_{t-1})
       +h_t(Y_t-D_t)\right].\label{eq:affine-one}
\end{equation}
These polyhedra may overlap. They are used for the proof, not enumerated
by the algorithm.

\begin{lemma}[Affine selection]\label{lem:affine-selection}
The original optimum is the minimum of the feasible linear-programming
optima $\min_{Y\in\mathcal P_z}L_z(Y)$. An optimal vertex of a best such
linear program is an optimum of the original problem.
\end{lemma}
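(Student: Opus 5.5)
The argument has two parts: first show that the optimal value equals $\min_z \min_{Y\in\mathcal P_z}L_z(Y)$, then show that a vertex attaining it is an original optimum. Throughout, $C$ and the feasible set are expressed in the cumulative coordinates of~\eqref{eq:cumulative}. By Lemma~\ref{lem:tightening}, the original feasible set equals the tightened polytope $\mathcal P_0$ (the case $z=0$). Moreover, each $\mathcal P_z\subseteq\mathcal P_0$, since~\eqref{eq:tier-lp} only adds constraints.

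\emph{Step 1 (each LP value is at least the original optimum).} Fix $z$ and $Y\in\mathcal P_z$, and write $x_t=Y_t-Y_{t-1}$. If $z_t=1$, then $x_t\ge Q$ and the period-$t$ price term is $p_{2,t}x_t=f_t(x_t)$. If $z_t=0$, the term is $p_{1,t}x_t\ge f_t(x_t)$, because $p_{2,t}\le p_{1,t}$ and $x_t\ge0$. This is exactly~\eqref{eq:eligible-one} read as an upper bound. The holding terms coincide, so $L_z(Y)\ge C(Y)\ge\OPT$.

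\emph{Step 2 (the original optimum is attained by some LP).} Let $Y$ be feasible and set $z_t=1$ iff $x_t\ge Q$. Then $Y\in\mathcal P_z$ and $L_z(Y)=C(Y)$ termwise by~\eqref{eq:onecost}. Hence $\inf C=\min_z\min_{\mathcal P_z}L_z$, provided the LP minima exist.

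For existence, each $\mathcal P_z$ is bounded, since all coordinates lie in $[0,H]$. It is also closed, being defined by finitely many non-strict linear inequalities (the bound $x_t\ge Q$ is non-strict). So every nonempty $\mathcal P_z$ is a polytope, and the affine function $L_z$ attains its minimum there at a vertex. There are finitely many vectors $z$, so the minimum over the feasible ones exists. Combining this with Steps 1 and 2, the original infimum is attained and equals that minimum. This settles the first sentence of the lemma. Note also that feasibility of the original problem is equivalent to $\mathcal P_0\neq\emptyset$, so at least one LP is feasible exactly when the original problem is.

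\emph{Step 3 (vertex optimality).} Let $z^*$ be a best vector and $Y^*$ an optimal vertex of $\mathcal P_{z^*}$. By Step 1, $\OPT\le C(Y^*)\le L_{z^*}(Y^*)=\OPT$, so $Y^*$ is an original optimum. The corresponding orders $x^*_t=Y^*_t-Y^*_{t-1}$ are then optimal.

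The only delicate point is the closedness and attainment argument in the presence of the downward discontinuity of $f_t$ at $Q$. The original objective is not lower semicontinuous when $p_{2,t}<p_{1,t}$, so attainment cannot be argued directly. The eligible-line representation is what transfers attainment to closed polytopes. I would make explicit that the inequality in Step 1 holds even when a $z_t=0$ plan orders $x_t\ge Q$. Beyond this the argument is routine.
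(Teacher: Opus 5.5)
Your proof is correct and follows the paper's argument. Your Steps 1 and 2 together are the paper's observation that the pointwise minimum of the eligible affine objectives equals $C$, and the rest matches it as well: compactness of each nonempty $\mathcal P_z$, attainment at a vertex, and the chain $\OPT\le C(v)\le L_z(v)=\OPT$.

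One side remark is wrong, although your proof does not use it. The threshold quantity itself receives the lower price, so each $f_t$ is the minimum of the continuous line $p_{1,t}x$ and the discounted line restricted to the closed set $x\ge Q$. It is therefore lower semicontinuous, and attainment on the compact feasible set could be argued directly. What the affine selection genuinely provides is an optimal \emph{vertex}, which Theorem~\ref{thm:critical-one} requires.
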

\begin{proof}
For each feasible $Y$, identity~\eqref{eq:eligible-one} allows a minimizing
tier to be selected independently in every period. Hence the pointwise
minimum of the eligible affine objectives equals $C(x)$. There are finitely
many tier selections, and every nonempty $\mathcal P_z$ is compact because
$0\le Y_t\le H$. Its affine objective attains a minimum at a vertex.
The minimum over all tier selections therefore equals the original optimum.
At a minimizing vertex $v$ of a best selection, $C(v)\le L_z(v)=\OPT$;
feasibility gives the reverse inequality.
\end{proof}

\subsection{Critical cumulative-purchase levels}
Define
\begin{equation}
 \V=\{a+kQ:a\in\A,\ k\in\mathbb Z,\ -n\le k\le n\}\cap[0,H].
 \label{eq:grid-one}
\end{equation}
After duplicate values are removed, let $M=|\V|$. Then
$M\le(2n+1)^2$.

\begin{theorem}[One-threshold critical states]\label{thm:critical-one}
Under either terminal policy, there is an optimal plan with $Y_t\in\V$
for all $t=0,\ldots,n$.
\end{theorem}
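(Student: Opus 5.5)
By Lemma~\ref{lem:affine-selection}, it suffices to take a best tier selection $z$ and show that every vertex of $\mathcal P_z$ has all coordinates in $\V$. The objective $L_z$ is affine, so a minimizing vertex exists whenever $\mathcal P_z$ is nonempty, and that vertex is then an optimal plan. The argument is therefore purely polyhedral. In the variables $Y_1,\dots,Y_n$, with $Y_0=0$ fixed and the terminal equality imposed when prescribed, $\mathcal P_z$ is defined by two kinds of constraints. The first kind are bound constraints $D_t\le Y_t\le\U_t$ (Lemma~\ref{lem:tightening}). The second kind are difference constraints $Y_t-Y_{t-1}\ge z_tQ$, where $z_tQ\in\{0,Q\}$. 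Every constraint is either a single-variable bound by an anchor in $\A$, or a difference of consecutive coordinates bounded by $0$ or $Q$.

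At a vertex $v$, take $n$ linearly independent tight constraints. Form a graph on the nodes $\{0,1,\dots,n\}$. Put an edge $\{t-1,t\}$ for each tight difference constraint. Treat a tight bound constraint at $t$, and also $Y_0=0$, as \emph{anchoring} node $t$ to a value in $\A$. Linear independence together with full rank means that every connected component of the tight-difference graph contains at least one anchored node; otherwise the component could be translated along $(1,\dots,1)$ on that component without violating tight constraints. Because the edges only join consecutive indices, each component is an interval $\{i,\dots,j\}$ of periods. Along it, $v_s-v_{s-1}=z_sQ$ for every $s$ in the interval other than $i$. Hence if $r$ is an anchored node with $v_r=a\in\A$, then
\[
 v_s = a + \Bigl(\sum_{u=r+1}^{s} z_u - \text{(analogous sum for } s<r)\Bigr)Q = a+kQ
\]
with $|k|\le |s-r|\le n$. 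Since feasibility gives $0\le v_s\le H$, we get $v_s\in\V$.

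Concretely, the steps are these. (i) Reduce to vertices via Lemma~\ref{lem:affine-selection}. (ii) Use the tightened corridor so that all bound constraints are anchors; $Y_0=0$ and the terminal value $D_n+b=\U_n$ lie in $\A$, and $0\in\A$. (iii) Run the rank argument: the tight system determines $v$ uniquely, so no component is free. (iv) Propagate along paths to obtain the $a+kQ$ form with $|k|\le n$. (v) Intersect with $[0,H]$.

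The main obstacle is step (iii), namely making rigorous that each component carries an anchor. I would handle it by exhibiting the perturbation direction $\pm\mathbf 1_{\{i,\dots,j\}}$ for an unanchored component and checking that it keeps every tight constraint tight. Every tight difference constraint either lies inside the component, where its value is unchanged, or straddles a boundary. A straddling tight constraint cannot occur, since it would have been an edge merging two components. A small enough perturbation keeps the slack constraints feasible. This contradicts $v$ being a vertex. A secondary subtlety is that the free terminal case has no terminal equality, but the argument does not need one, because $\U_n$ is still in $\A$.
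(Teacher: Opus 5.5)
Your proposal is correct and follows the paper's proof essentially step for step. Like the paper, you take an optimal vertex of a best tier-selection polytope, use the translation perturbation on an unanchored component to show every component of tight difference edges contains an anchored node, and then propagate increments of $0$ or $Q$ along at most $n$ edges to obtain $a+kQ$ with $|k|\le n$ inside $[0,H]$. Your intermediate claim that every vertex of $\mathcal P_z$ lies on the grid is slightly stronger than what the paper states, but the paper's argument yields it just as directly.
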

\begin{proof}
Choose the optimal vertex supplied by Lemma~\ref{lem:affine-selection}.
The possible active equalities are
\begin{equation}
 Y_t=D_t,\quad Y_t=\U_t,\quad
 Y_t-Y_{t-1}=0,\quad Y_t-Y_{t-1}=Q,
 \label{eq:active-one}
\end{equation}
together with the fixed initial coordinate and any prescribed terminal
coordinate. The latter equals $U_n=\U_n$ and is an anchor.

Consider the path on nodes $0,\ldots,n$. Retain the edge $(t-1,t)$ whenever
its difference inequality is active. Mark a node as anchored if its
coordinate is fixed by an inventory bound, the initial condition, or the
terminal equality. Every connected component of retained edges has an
anchored node. Otherwise all coordinates in that component could be shifted
by the same sufficiently small positive or negative amount. Active
difference equalities would remain unchanged, and all remaining inequalities
would remain feasible because they have positive slack. This would express
the purported vertex as the midpoint of two distinct feasible points.

Starting at an anchored value $a$, traverse retained edges to any other
node in its component. Each edge changes the coordinate by $0$ or $Q$.
Traversing forward adds these increments; traversing backward subtracts
them. The path contains at most $n$ edges, so the resulting coordinate
is $a+kQ$ with $|k|\le n$. Feasibility also places it in $[0,H]$.
\end{proof}

The grid contains exact real quantities determined by the input. Its size
does not depend on the numerical magnitude of demand, capacity, or $Q$.
The theorem does not say that every unrestricted optimal prefix ending at
a grid point can be represented on this same grid. Only the existence of
a grid-restricted globally optimal complete trajectory is required.

\subsection{Two prefix minima per layer}
Sort the grid as $v_1<\cdots<v_M$. Let $G_t(v)$ denote the minimum cost of
a path restricted to this grid up to boundary $t$, with $Y_t=v$.
Initialize $G_0(0)=0$ and all other entries to $+\infty$. For a feasible
target $D_t\le v\le\U_t$, the Bellman equation is
\begin{align}
 G_t(v)=h_t(v-D_t)+\min\biggl\{&
 p_{1,t}v+\min_{u\in\V:\,u\le v}
                 \bigl(G_{t-1}(u)-p_{1,t}u\bigr),\notag\\[-2pt]
 &p_{2,t}v+\min_{u\in\V:\,u\le v-Q}
                 \bigl(G_{t-1}(u)-p_{2,t}u\bigr)\biggr\}.
 \label{eq:dp-one}
\end{align}
Set $G_t(v)=+\infty$ outside the feasible band. The first term permits
all nonnegative order quantities; the second requires an order of at least
$Q$. Their minimum is the exact all-units cost by~\eqref{eq:eligible-one}.

For the first term, scan targets in increasing order and maintain the
smallest value of $G_{t-1}(u)-p_{1,t}u$ in the prefix $u\le v$.
For the second, advance a pointer through the prefix $u\le v-Q$ and maintain
the corresponding minimum. Each predecessor enters each prefix once.
Both scans read the same frozen previous layer; they do not use values
computed earlier in the current layer. Hence one period costs $O(M)$
time and two arrays suffice for objective computation.

\begin{theorem}[Cubic one-threshold algorithm]\label{thm:cubic}
The one-common-threshold problem with arbitrary temporal prices and
variable end-inventory bounds can be solved exactly in $O(n^3)$ arithmetic
operations and $O(n^2)$ working space. These bounds include recovery of
an optimal purchase plan.
\end{theorem}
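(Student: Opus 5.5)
The proof combines Theorem~\ref{thm:critical-one} with the recursion~\eqref{eq:dp-one}, and then accounts separately for correctness, time, and space.

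\emph{Correctness.} Theorem~\ref{thm:critical-one} supplies an optimal plan whose coordinates all lie in $\V$. By Lemma~\ref{lem:tightening}, this plan also satisfies the tightened bounds. An induction on $t$ shows that $G_t(v)$ equals the minimum cost of a grid-restricted feasible prefix ending at $Y_t=v$. The induction step uses identity~\eqref{eq:eligible-one}. For a predecessor $u\le v$ with order $v-u$, the cheaper of the two admissible lines is exactly $f_t(v-u)$. The second line is admissible only when $u\le v-Q$, and the first line is always admissible. Because $p_{2,t}\le p_{1,t}$, the minimum of the two terms therefore equals the all-units cost. Holding cost is added separately as $h_t(v-D_t)$.

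The optimum is read off at the end. Under the prescribed policy it is $G_n(H)$. Under the free policy it is $\min_{D_n\le v\le H}G_n(v)$. Since every grid path is feasible, and the grid contains an optimal path, this value equals $\OPT$.

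\emph{Complexity of values.} Building $\V$ requires $O(n)$ anchors, giving $O(n^2)$ candidate values, which are sorted in $O(n^2\log n)$ time. This is within the $O(n^3)$ bound, so the extra logarithm does no harm. Each layer is computed by the two prefix-minimum scans in $O(M)=O(n^2)$ time. Over $n$ layers this gives $O(n^3)$ time. Only two layers of length $M$ are kept, so the space is $O(n^2)$.

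\emph{Recovery.} This is the main obstacle. Storing predecessor pointers for every layer would cost $O(nM)=O(n^3)$ space, so I would avoid that. The first approach I would try is backward reconstruction by recomputation, which I expect to work.

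First run the forward pass once and record only the optimal terminal state $v_n$. Then, for $t=n,n-1,\dots,1$, recompute layer $G_{t-1}$ from scratch using forward passes with two arrays. Search for a predecessor $u$ satisfying
\[
G_{t-1}(u)+f_t(v_t-u)+h_t(v_t-D_t)=G_t(v_t).
\]
Done naively, this costs $O(n\cdot n^3)=O(n^4)$ time, which is too slow.

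To stay cubic, I would use checkpointing instead. Store every $\lceil\sqrt n\rceil$-th layer, which takes $O(n^{1.5})$ space. That is not enough by itself, so alternatively store, for each layer, only the argmin index of the two prefix minima at the single state needed. This does not work either, because that state is unknown during the forward pass.

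The cleaner option is to store, per layer, a compact predecessor array of $M$ indices. This costs $O(n^3)$ space, which breaks the bound, and shows why the problem is delicate.

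The approach I would commit to is Hirschberg-style divide and conquer over the time horizon. Run forward costs from layer $0$ to the midpoint layer $\lfloor n/2\rfloor$. Run backward cost-to-go values from layer $n$ to the same midpoint. Backward costs satisfy a symmetric recursion, which can be computed with suffix minima in $O(M)$ per layer. Choose a midpoint state that minimizes the sum of the forward and backward values, and then recurse on the two halves.

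Each recursion level costs $O(nM)$ time in total. Naively, with $\log n$ levels, this would give $O(n^3\log n)$. However, the work on a subinterval of length $\ell$ is $O(\ell M)$, and the interval lengths halve at each level, so the total time is the geometric sum $O(nM)=O(n^3)$. The working space is $O(M+n)=O(n^2)$. Checking this geometric-sum argument and the backward recursion is where the care lies, and it is presumably what \ref{app:reconstruction} details.
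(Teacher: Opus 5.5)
Your correctness argument and the value-computation bounds are fine. Hirschberg-style forward--backward divide and conquer is also the route the paper takes for recovery. The gap is in your time analysis of that recursion.

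As you describe it, every subproblem runs its forward and backward layers over the whole grid $\V$, so a subproblem on a time interval of length $\ell$ costs $O(\ell M)$. At depth $d$ there are up to $2^d$ subproblems, each of length about $n/2^d$. Every depth therefore costs $O(nM)$, as you yourself first note, and the $\lceil\log_2 n\rceil$ depths give only $O(nM\log n)=O(n^3\log n)$. Halving the interval lengths does not produce a geometric series, because the number of intervals doubles at the same rate. Your ``geometric sum'' step is false for the algorithm you specified, and the extra logarithm breaks the claimed $O(n^3)$ bound.

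The missing idea is to shrink the state range as well as the time interval. Specify a subproblem by time endpoints $a<b$ and grid indices $\ell\le r$ with $Y_a=v_\ell$ and $Y_b=v_r$. Cumulative purchases are nondecreasing, so every intermediate coordinate lies in $\{v_\ell,\ldots,v_r\}$. Its forward and backward layers therefore need only $O(r-\ell+1)$ work each.

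The selected midpoint $v_s$ splits the parent into $(a,c,\ell,s)$ and $(c,b,s,r)$, whose widths $s-\ell$ and $r-s$ add up to $r-\ell$. Hence at every depth the widths sum to at most $M-1$. Zero-width subproblems need no dynamic program, and $w+1\le 2w$ otherwise. The work at depth $d$ is then $O(M\lceil n/2^d\rceil)$, and since $\sum_d\lceil n/2^d\rceil=O(n)$, the total is $O(nM)=O(n^3)$.

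This width-partition argument is the content of Proposition~\ref{prop:recovery}. It relies on the monotonicity of cumulative coordinates, which is why the paper does not transfer it to the lattice DP over physical inventory.

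Separately, for the $O(M+n)$ space claim you should state that the forward and backward arrays are released before descending into either child. Otherwise the recursion can hold $O(M\log n)$ values at once.
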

\begin{proof}
The Bellman equation compares all feasible grid predecessors and charges
exact procurement and holding costs. It therefore computes the optimum
among grid-restricted paths. For prescribed terminal inventory, return
$G_n(D_n+b)$; for free terminal inventory, return $\min_{v\in\V}G_n(v)$.
Theorem~\ref{thm:critical-one} shows that this final value equals the
unrestricted optimum.

Constructing and sorting~\eqref{eq:grid-one} takes $O(n^2\log(n+1))$
operations and $O(n^2)$ space. The $n$ layers require $O(nM)=O(n^3)$
operations and $O(M)$ space. \ref{app:reconstruction} proves
that a forward--backward midpoint reconstruction takes $O(nM)$ further
operations and $O(M+n)$ space for two tiers. Its additional time and space
fit within the displayed bounds.
\end{proof}

\subsection{A small instance}
Consider $Q=4$, zero holding costs, and the data in
Table~\ref{tab:example-one}, with $I_3=0$.

\begin{table}[htbp]\centering
\caption{An arbitrary-price instance with an optimal cost of $15$.}
\label{tab:example-one}
\begin{tabular}{rrrrrrr}
\toprule
$t$&$d_t$&$B_t$&$p_{1,t}$&$p_{2,t}$&$x_t^*$&$I_t^*$\\
\midrule
1&1&4&5&1&5&4\\
2&2&8&10&3&0&2\\
3&4&8&5&4&2&0\\
\bottomrule
\end{tabular}
\end{table}

The optimal cumulative coordinates are $(Y_1,Y_2,Y_3)=(5,5,7)$.
Here $D=(1,3,7)$ and the tightened upper bounds are $(5,7,7)$,
so these coordinates belong to the critical grid. The first order uses
the first-period discount, while the last order uses the regular tier.
Neither sorting periods by price nor comparing only the price of surviving
inventory describes this policy. Equation~\eqref{eq:dp-one} evaluates complete
histories and their capacities instead.

\section{Multiple and period-dependent thresholds}\label{sec:multi}
\subsection{General all-units pricing}
In period $t$, let
\begin{equation}
 0=q_{t,0}<q_{t,1}<\cdots<q_{t,m_t}<q_{t,m_t+1}=+\infty,
 \qquad p_{t,0}\ge\cdots\ge p_{t,m_t}\ge0.
 \label{eq:thresholds}
\end{equation}
The cost is $f_t(x)=p_{t,k}x$ for $q_{t,k}\le x<q_{t,k+1}$.
Equality at a threshold receives the newly eligible lower price.
Prices are non-increasing with quantity within each period, but are
unrestricted over time. Define the total tier count and maximum local
count by
\begin{equation}
 P=\sum_{t=1}^n(m_t+1),\qquad K_{\max}=\max_t(m_t+1).
 \label{eq:tiercounts}
\end{equation}
The one-threshold model is recovered by $m_t=1$ and $q_{t,1}=Q$,
with $p_{t,0}=p_{1,t}$ and $p_{t,1}=p_{2,t}$.

The eligible-line representation generalizes to
\begin{equation}
 f_t(x)=\min_{k:\,q_{t,k}\le x}p_{t,k}x.\label{eq:eligible-multi}
\end{equation}
The largest eligible index has a weakly smaller unit price than all
previous ones. Accordingly, every line can extend past its nominal upper
endpoint, but its lower eligibility threshold must remain.

\subsection{A critical grid for local thresholds}
Retain $D$, $\U$, $H$, and $\A$ from Section~\ref{sec:model}. Define
\begin{align}
 \Sset&=\left\{\sum_{t=1}^n z_t:
          z_t\in\{0,q_{t,1},\ldots,q_{t,m_t}\}\right\}\cap[0,H],
          \label{eq:sums-local}\\
 \V&=\{a+s,a-s:a\in\A,\ s\in\Sset\}\cap[0,H].
          \label{eq:grid-multi}
\end{align}
Write $R=|\Sset|$ and $M=|\V|$. Notice that $R\le M$ because $0$ is an
anchor and every $s\in\Sset$ is a grid point. Also,
$M\le2(2n+1)R$.

\begin{theorem}[Multi-threshold critical states]\label{thm:critical-multi}
Under either terminal policy, there is an optimal plan whose cumulative
purchases all belong to~\eqref{eq:grid-multi}.
\end{theorem}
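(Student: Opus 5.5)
We follow the proof of Theorem~\ref{thm:critical-one}, replacing the binary tier vectors by general tier selections. For each selection $k=(k_1,\ldots,k_n)$ with $0\le k_t\le m_t$, let $\mathcal P_k$ be the tightened polytope of Lemma~\ref{lem:tightening}, cut by the extra inequalities $Y_t-Y_{t-1}\ge q_{t,k_t}$. On $\mathcal P_k$ we use the affine objective with unit price $p_{t,k_t}$ in period $t$. By~\eqref{eq:eligible-multi}, the argument of Lemma~\ref{lem:affine-selection} carries over verbatim: for any feasible $Y$, choose in each period the largest eligible index, which attains $f_t$. There are finitely many selections and each nonempty $\mathcal P_k$ is compact. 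Hence some best selection has an optimal vertex $v$, and $v$ is an optimum of the original problem.

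Next we classify the active constraints at $v$. They are $Y_t=D_t$, $Y_t=\U_t$, $Y_t-Y_{t-1}=q_{t,k_t}$, the initial equality $Y_0=0$, and, when present, the terminal equality, whose value $\U_n$ is an anchor. The constraint $Y_t-Y_{t-1}\ge0$ is the case $k_t=0$. We build the path graph on $0,\ldots,n$, retaining edge $(t-1,t)$ when its difference constraint is active, and call a node anchored when an equality fixes its value at an element of $\A$. The shifting argument of Theorem~\ref{thm:critical-one} shows that every component contains an anchored node: a component without one could be moved by $\pm\varepsilon$ as a block, contradicting that $v$ is a vertex.

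The key step is to compute coordinates within a component. A component is a contiguous interval of nodes $i,\ldots,j$ containing an anchored node $r$ with value $a\in\A$. For a node $s$ to the right of $r$ we get $Y_s=a+\sum_{t=r+1}^{s}q_{t,k_t}$, and for a node to the left $Y_s=a-\sum_{t=s+1}^{r}q_{t,k_t}$. In either case the sum runs over distinct periods, each contributing an element of $\{0,q_{t,1},\ldots,q_{t,m_t}\}$, so it has the form $\sum_t z_t$ required in~\eqref{eq:sums-local}.

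The one point needing care is the cap $\cap[0,H]$ in the definition of $\Sset$, so we must check that the sum $s$ itself lies in $[0,H]$. Going right, $0\le s=Y_s-a\le H$, since both $Y_s$ and $a$ lie in $[0,H]$. Going left, $s=a-Y_s\in[0,H]$ by the same reasoning. Thus $s\in\Sset$, and $Y_s=a\pm s\in[0,H]$ lies in~\eqref{eq:grid-multi}. Unlike the one-threshold case, there is no accumulated multiple $kQ$ to bound; the only potential obstacle is this membership check, and it is immediate. This holds for every component, so every coordinate of $v$ lies in $\V$, under either terminal policy.
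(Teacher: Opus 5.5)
Your proposal is correct and follows essentially the same route as the paper. Both arguments take an optimal vertex of a best tier-selection linear program, show that every component of active difference edges contains an anchor, and observe that a path from the anchor adds or subtracts at most one local threshold per intervening period; membership in $[0,H]$ then follows because both endpoints lie in $[0,H]$ (as a minor point, you use $s$ both for a node index and for the threshold sum, so rename one of them).
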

\begin{proof}
For each tier selection $(k_1,\ldots,k_n)$, minimize the affine objective
\[
 \sum_{t=1}^n\left[p_{t,k_t}(Y_t-Y_{t-1})+h_t(Y_t-D_t)\right]
\]
on the tightened inventory polytope with additional inequalities
$Y_t-Y_{t-1}\ge q_{t,k_t}$. Identity~\eqref{eq:eligible-multi} and the
argument of Lemma~\ref{lem:affine-selection} give a globally optimal vertex
of a best selection.

Retain active edges $(t-1,t)$ exactly as in
Theorem~\ref{thm:critical-one}. Their increments are now the selected
local thresholds, including zero. Every retained-edge component has an
anchor. A path from an anchor to a coordinate adds, or subtracts, at most
one local threshold from each intervening period. Choose zero in every
other period. The resulting sum belongs to the untruncated set
in~\eqref{eq:sums-local}. Both the anchor and the coordinate lie in $[0,H]$,
so their distance is at most $H$; the sum is therefore retained in $\Sset$.
\end{proof}

The use of $a+s$ and $a-s$, rather than arbitrary signed sums, follows from
the path geometry: all edges on a path from one time to another are
traversed in the same temporal direction. The local-choice construction
also avoids allowing a single period to contribute two distinct thresholds.
It may still retain many values that are unnecessary for a particular
instance, but it is guaranteed not to exclude every optimal plan.

\subsection{Tierwise prefix-minimum dynamic programming}
For each feasible target $D_t\le v\le\U_t$,
\begin{equation}
 G_t(v)=h_t(v-D_t)+\min_{0\le k\le m_t}\left\{
 p_{t,k}v+\min_{u\in\V:\,u\le v-q_{t,k}}
          \bigl(G_{t-1}(u)-p_{t,k}u\bigr)\right\}.
 \label{eq:dp-multi}
\end{equation}
The initialization and terminal selection are unchanged. Every tier
requires one prefix-minimum scan; there is no optimization over the
nominal upper endpoint of that tier. The total time for value layers is
$O(PM)$ once the grid is sorted.

\begin{algorithm}[htbp]
\caption{One period of the critical-grid dynamic program}
\label{alg:layer}
\begin{algorithmic}[1]
\Require Sorted grid $v_1<\cdots<v_M$; previous values $A_1,\ldots,A_M$;
period $t$
\Ensure Current values $C_1,\ldots,C_M$
\State $C_i\gets+\infty$ for all $i$
\For{$k=0,\ldots,m_t$}
 \State $j\gets1$; $\beta\gets+\infty$
 \For{$i=1,\ldots,M$}
  \While{$j\le M$ and $v_j\le v_i-q_{t,k}$}
   \State $\beta\gets\min\{\beta,A_j-p_{t,k}v_j\}$
   \State $j\gets j+1$
  \EndWhile
  \State $C_i\gets\min\{C_i,p_{t,k}v_i+\beta\}$
 \EndFor
\EndFor
\For{$i=1,\ldots,M$}
 \If{$D_t\le v_i\le\U_t$}
  \State $C_i\gets C_i+h_t(v_i-D_t)$
 \Else
  \State $C_i\gets+\infty$
 \EndIf
\EndFor
\end{algorithmic}
\end{algorithm}

\begin{proposition}[Correctness on the finite grid]\label{prop:grid-dp}
Algorithm~\ref{alg:layer}, applied successively with~\eqref{eq:grid-multi},
returns the unrestricted global optimum under the specified terminal policy.
\end{proposition}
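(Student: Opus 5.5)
The plan is to prove two inequalities between the returned value and the unrestricted optimum $\OPT$. First, the value $G_n$ computed by Algorithm~\ref{alg:layer} is exactly the minimum cost over grid-restricted paths. Second, Theorem~\ref{thm:critical-multi} guarantees that some optimal plan lies on the grid.

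\textbf{Step 1: the algorithm evaluates \eqref{eq:dp-multi} exactly.} I would fix a period $t$ and a tier $k$, and show by induction on $i$ the following invariant. When the inner \texttt{while} loop stops for target $v_i$, the pointer $j$ has passed exactly the indices with $v_j\le v_i-q_{t,k}$. At that moment $\beta=\min\{A_j-p_{t,k}v_j: v_j\le v_i-q_{t,k}\}$, with $\beta=+\infty$ if this set is empty.

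Monotonicity of the sorted grid makes these prefix sets nested in $i$. So the single forward-moving pointer suffices, and each $j$ is absorbed once per tier. Hence after all tiers, $C_i$ equals the inner minimum of \eqref{eq:dp-multi}. The final loop adds $h_t(v_i-D_t)$ on the band $D_t\le v_i\le\U_t$ and sets $+\infty$ elsewhere. Since $A$ is the frozen previous layer, no current-layer value contaminates the scan.

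\textbf{Step 2: $G_t(v)$ is the minimum cost of grid paths with $Y_t=v$.} I would argue by induction on $t$, with base case $G_0$. For a grid path $0=Y_0,\dots,Y_t=v$ feasible for the tightened corridor (Lemma~\ref{lem:tightening}), the last order is $x=v-Y_{t-1}\ge0$. By \eqref{eq:eligible-multi},
\[
f_t(x)=\min_{k:\,q_{t,k}\le x}p_{t,k}x .
\]
So the tier-$k$ term with predecessor $u=Y_{t-1}$ is available exactly when $u\le v-q_{t,k}$, and the best available term charges exactly $f_t(x)$. The $k=0$ term ($q_{t,0}=0$) enforces $Y_{t-1}\le Y_t$. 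Any value produced by the recursion is attained by an actual grid path, whose cost is at least its true cost because $p_{t,k}x\ge f_t(x)$ whenever $q_{t,k}\le x$. Conversely, every grid path's cost is attained by choosing its best eligible tier. This gives equality. Infeasible predecessors carry $+\infty$ and never contribute.

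\textbf{Step 3: terminal selection and comparison with $\OPT$.} Under prescribed inventory, I would return $G_n(D_n+b)$. The coordinate $D_n+b=\U_n$ is an anchor, hence in $\V$. Under free inventory, I would return $\min_v G_n(v)$, where the band constraint $D_n\le v\le \U_n$ encodes $I_n\in[0,B_n]$.

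Every grid path is feasible for the original problem by Lemma~\ref{lem:tightening}, so the returned value is at least $\OPT$. Theorem~\ref{thm:critical-multi} supplies an optimal plan with all $Y_t\in\V$. By Step 2, its cost is at least the returned value, which gives equality.

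The main obstacle I expect is Step 2's treatment of the all-units discontinuity. One must check that allowing a tier line beyond its nominal upper endpoint never undercuts $f_t$; this follows from prices being non-increasing in $k$ within a period. One must also check that the lower eligibility threshold is enforced exactly by the strict prefix cut $u\le v-q_{t,k}$, including the boundary case $x=q_{t,k}$, which receives the discounted price by convention.
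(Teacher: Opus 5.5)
Your proposal is correct and follows essentially the same route as the paper. Both arguments use the equivalence $x_t=v-u\ge q_{t,k}\iff u\le v-q_{t,k}$, together with non-increasing prices within a period, to show that \eqref{eq:dp-multi} is the exact shortest-path recurrence over grid paths; grid-path feasibility then gives one inequality and Theorem~\ref{thm:critical-multi} gives the other. Your pointer invariant in Step~1 and your check that $D_n+b=\U_n\in\V$ spell out details the paper leaves implicit.
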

\begin{proof}
For fixed predecessor $u$ and target $v$, $x_t=v-u\ge q_{t,k}$ is equivalent
to $u\le v-q_{t,k}$. Minimizing over eligible tiers gives
$f_t(v-u)$ exactly. Therefore~\eqref{eq:dp-multi} is the ordinary shortest-path
recurrence on the layered grid. Its final value is the restricted optimum.
Theorem~\ref{thm:critical-multi} supplies an unrestricted optimum on that grid.
Conversely, every grid trajectory satisfies the original balance and
inventory constraints, so the restricted optimum cannot be smaller.
\end{proof}

\subsection{Constructing the grid and bounding the running time}
The sum set is generated recursively from $\Sset^{(0)}=\{0\}$:
\begin{equation}
 \Sset^{(t)}=\bigcup_{q\in\{0,q_{t,1},\ldots,q_{t,m_t}\}}
                (\Sset^{(t-1)}+q)\cap[0,H].\label{eq:sum-generation}
\end{equation}
Every offset is nonnegative. Once a partial sum exceeds $H$, it can never
contribute to a retained final sum, so clipping is exact. Each stage merges
at most $K_{\max}$ sorted translated lists and removes duplicates. Since
zero is always an option, every retained partial sum remains in the final
set, and each partial set has size at most $R$. A standard heap-based
multiway merge therefore constructs $\Sset$ in
$O(PR\log(K_{\max}+1))$ operations.

For each anchor, the lists of feasible $a+s$ values and of feasible $a-s$
values, with the latter traversed in reverse sum order, are sorted.
Merging these $O(n)$ lists takes $O(nR\log(n+1))$ operations. This construction
uses comparisons rather than an assumed constant-time hashing operation.

\begin{theorem}[General instance-dependent bounds]\label{thm:general}
For arbitrary local thresholds, the optimal value can be computed in
\begin{equation}
 O\bigl(PR\log(K_{\max}+1)+nR\log(n+1)+PM\bigr)
 \label{eq:general-time}
\end{equation}
operations. Working space, excluding input storage, is $O(M+n)$.
An optimal plan can be recovered within the same working-space bound
using at most $O(nK_{\max}M)$ additional operations.
\end{theorem}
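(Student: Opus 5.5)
The proof splits into three parts: correctness of the value, the time and space accounting for the grid and the layers, and the plan-recovery bound.

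\textbf{Correctness.} Proposition~\ref{prop:grid-dp}, together with Theorem~\ref{thm:critical-multi}, already shows that running Algorithm~\ref{alg:layer} successively on the grid~\eqref{eq:grid-multi} returns the unrestricted optimum under either terminal policy.

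\textbf{Time and space for the value.} I would add the costs of three phases.
\begin{enumerate}
\item Building $\Sset$ through~\eqref{eq:sum-generation}. Each stage $t$ merges $m_t+1\le K_{\max}$ sorted lists, each of length at most $R$, with a heap. This costs $O((m_t+1)R\log(K_{\max}+1))$. Summing over $t$ gives $O(PR\log(K_{\max}+1))$.
\item Building and sorting $\V$. For each of the at most $2n+1$ anchors, the lists $\{a+s\}$ and $\{a-s\}$ come out sorted directly from the sorted $\Sset$, in $O(R)$ each. A heap merge of these $O(n)$ lists, with duplicate removal, costs $O(nR\log(n+1))$.
\item Running the layers. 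In Algorithm~\ref{alg:layer}, for a fixed tier $k$ the pointer $j$ only advances, so the inner while loop costs $O(M)$ in total. Period $t$ therefore costs $O((m_t+1)M+M)$. Summing over $t$ gives $O(PM)$, since $n\le P$.
\end{enumerate}
For space, only two layers of length $M$ and the sorted grid are stored. The sum set is discarded once $\V$ is built, and this is fine because $R\le M$. Building $\Sset$ needs only two partial lists of size $R$ plus a heap of size $K_{\max}$. The merge of $\V$ needs $O(n)$ pointers plus the output. Together with the arrays $D,\U$, working space is $O(M+n)$.

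\textbf{Recovery in $O(M+n)$ space.} This is the main obstacle, because storing all $n$ layers would take $O(nM)$ space. I would reuse the forward--backward midpoint scheme of \ref{app:reconstruction}, generalized from two tiers to $m_t+1$ tiers.
\begin{enumerate}
\item Define a backward value $F_t(v)$, the cheapest grid completion from $Y_t=v$. It satisfies the mirror recurrence with suffix minima: a predecessor constraint $u\le v-q_{t,k}$ becomes a successor constraint $w\ge v+q_{t+1,k}$. Each backward layer therefore also costs $O((m_t+1)M)$.
\item At a middle boundary $t^*$, compute $G_{t^*}$ forward and $F_{t^*}$ backward. Choose $v^*$ minimizing $G_{t^*}(v)+F_{t^*}(v)$.
\item Recurse on the two halves, each with fixed endpoints. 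Fixing an endpoint only changes initialization, so both halves use the same grid.
\end{enumerate}
In the time analysis of the standard Hirschberg-type argument, each recursion level processes each period $O(1)$ times at cost $O(K_{\max}M)$. Halving gives total time $O(nK_{\max}M)$. I expect the geometric-series bookkeeping to yield exactly this bound, without an extra logarithmic factor.

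For the space, the recursion stack has depth $O(\log n)$ and holds $O(1)$ data per frame. The output plan has size $O(n)$, and each subproblem needs only two layer arrays at a time, so space stays $O(M+n)$.

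Tie-breaking needs care: the selected midpoint values must be mutually consistent. This holds because each recursive subproblem's optimum, combined with the fixed endpoint values, attains the global optimum.
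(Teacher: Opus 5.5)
Your accounting for the value computation matches the paper: heap merges building $\Sset$, the merge of $O(n)$ sorted anchor lists for $\V$, and $O((m_t+1)M)$ work per layer via monotone pointers. The gap is in the recovery time. You let both halves of every split ``use the same grid,'' so each subproblem runs its forward and backward layers over all $M$ grid points. Then, exactly as you say, every recursion depth touches each period $O(1)$ times at cost $O(K_{\max}M)$. So \emph{each} depth costs order $nK_{\max}M$, and over $\lceil\log_2 n\rceil$ depths you get $O(nK_{\max}M\log n)$, not $O(nK_{\max}M)$. Halving the time axis alone produces no geometric series. In Hirschberg's algorithm the decay comes from the second string also being split at the chosen midpoint, and your scheme has no counterpart of that.

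The missing idea is to shrink the coordinate range using monotonicity of cumulative purchases. A subproblem with $Y_a=v_\ell$ and $Y_b=v_r$ needs only the grid points $v_\ell,\dots,v_r$, because every intermediate $Y_t$ lies between them. After selecting the midpoint value $v_s$, the children use $[v_\ell,v_s]$ and $[v_s,v_r]$, so the widths $r-\ell$ of all subproblems at one depth sum to at most $M-1$. A subproblem of positive width $w$ spanning $L$ periods costs $O(K_{\max}(w+1)L)=O(K_{\max}wL)$, and zero-width subproblems need no dynamic program. Hence depth $d$ costs $O(K_{\max}M\lceil n/2^d\rceil)$, and the sum over depths is $O(nK_{\max}M)$.

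A smaller omission: your space count includes a merge heap of size $K_{\max}$ without bounding it by $O(M+n)$. The paper first discards thresholds above $H$, which can never be eligible, after which $K_{\max}\le R+1\le M+1$.
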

\begin{proof}
The preceding merges account for preprocessing. The value layers account
for $PM$ operations. The sum and grid arrays, merge heaps, and two value
layers occupy $O(M+n)$ space because $R\le M$; moreover $K_{\max}\le R+1$
after irrelevant thresholds above $H$ are ignored. Processing the original
input costs $O(P)$, which is already covered. The reconstruction bounds
follow from \ref{app:reconstruction}.
\end{proof}

Suppose now that every positive threshold belongs to a fixed global set
$\{Q_1,\ldots,Q_m\}$. Periods may use different subsets of this set.
Every sum in~\eqref{eq:sums-local} has the form
\begin{equation}
 \sum_{r=1}^m k_rQ_r,\qquad k_r\in\Zplus,\qquad
 \sum_{r=1}^m k_r\le n.\label{eq:count-vector}
\end{equation}
Counting the nonnegative integer vectors yields
\begin{equation}
 R\le {n+m\choose m},\qquad
 M\le2(2n+1){n+m\choose m}.\label{eq:fixed-count}
\end{equation}
Different vectors may yield the same sum; the algorithm removes such
coincidences rather than counting them as different states.

\begin{corollary}[Fixed global threshold set]\label{cor:fixed-m}
For fixed $m\ge1$, the problem with arbitrary temporal prices and arbitrary
period-specific inventory bounds is solvable in $O_m(n^{m+2})$ arithmetic
operations and $O_m(n^{m+1})$ working space, including an optimal plan.
\end{corollary}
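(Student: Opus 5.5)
The corollary follows by substituting the counting bound~\eqref{eq:fixed-count} into Theorem~\ref{thm:general}. Correctness is already supplied by Proposition~\ref{prop:grid-dp}, since a fixed global threshold set is a special case of arbitrary local thresholds. It therefore remains to bound $R$, $M$, $P$ and $K_{\max}$ in terms of $n$ and the fixed $m$.

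First, every period uses a subset of $\{Q_1,\ldots,Q_m\}$, so $m_t\le m$. Hence $K_{\max}\le m+1=O_m(1)$ and $P\le n(m+1)=O_m(n)$. Second, \eqref{eq:count-vector} shows that every element of $\Sset$ is indexed by a vector $(k_1,\ldots,k_m)\in\Zplus^m$ with $\sum_r k_r\le n$. Each period contributes at most one threshold, so the total count is at most $n$. This gives $R\le\binom{n+m}{m}$. For fixed $m$ we have $\binom{n+m}{m}\le (n+m)^m/m!=O_m(n^m)$. Then $M\le 2(2n+1)R=O_m(n^{m+1})$ by~\eqref{eq:fixed-count}.

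Substituting into~\eqref{eq:general-time}, the three terms become:
\begin{itemize}
\item $PR\log(K_{\max}+1)=O_m(n\cdot n^m)=O_m(n^{m+1})$, because the logarithm is a constant depending on $m$;
\item $nR\log(n+1)=O_m(n^{m+1}\log(n+1))$, which is $O_m(n^{m+2})$ since $\log(n+1)\le n$ for $n\ge1$;
\item $PM=O_m(n\cdot n^{m+1})=O_m(n^{m+2})$.
\end{itemize}
The working space $O(M+n)$ is $O_m(n^{m+1})$. Plan recovery costs $O(nK_{\max}M)=O_m(n^{m+2})$ additional operations, within the same space bound by Theorem~\ref{thm:general}. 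All bounds therefore match the claim.

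The only step needing care is the counting bound $R\le\binom{n+m}{m}$. One must justify that \eqref{eq:count-vector} covers every sum in~\eqref{eq:sums-local}, including periods that use only a subset of the global thresholds: the choice $z_t=0$, or any $z_t\in\{Q_r\}$, increments at most one coordinate $k_r$ by one. It is also worth confirming that $m\ge1$ fixed makes $K_{\max}\le R+1$ harmless. Beyond this, the argument is direct bookkeeping, with the $\log$ factor absorbed by the $n$ slack between $n^{m+1}\log n$ and $n^{m+2}$.
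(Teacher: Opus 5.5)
Your proposal is correct and follows the paper's proof essentially verbatim. You bound $K_{\max}\le m+1$ and $P\le(m+1)n$, use~\eqref{eq:fixed-count} to get $R=O_m(n^m)$ and $M=O_m(n^{m+1})$, substitute these into Theorem~\ref{thm:general}, and note that the $O_m(n^{m+1}\log(n+1))$ grid-construction term is dominated by $O_m(n^{m+2})$; your extra details (each period contributes at most one $Q_r$, and recovery costs $O(nK_{\max}M)$) spell out what the paper leaves to the surrounding text.
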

\begin{proof}
For fixed $m$, $R=O_m(n^m)$, $M=O_m(n^{m+1})$,
$P\le(m+1)n$, and $K_{\max}\le m+1$. Substitute these inequalities in
Theorem~\ref{thm:general}. The grid-construction term
$O_m(n^{m+1}\log(n+1))$ is dominated by $O_m(n^{m+2})$.
\end{proof}

\begin{table}[htbp]\centering
\caption{Bounds for a fixed global set of positive threshold values.
Working space includes plan recovery but excludes input storage and bit costs.}
\label{tab:complexities}
\begin{tabular}{lll}
\toprule
Global threshold values&Arithmetic time&Working space\\
\midrule
One common threshold&$O(n^3)$&$O(n^2)$\\
Two distinct values&$O(n^4)$&$O(n^3)$\\
Three distinct values&$O(n^5)$&$O(n^4)$\\
Fixed $m$ values&$O_m(n^{m+2})$&$O_m(n^{m+1})$\\
\bottomrule
\end{tabular}
\end{table}

\subsection{Why a local tier bound is not a global breakpoint bound}
With unrestricted period-dependent thresholds, only
\begin{equation}
 R\le\prod_{t=1}^n(m_t+1)\label{eq:product-bound}
\end{equation}
is guaranteed. Even one positive threshold per period can produce an
exponential set: choose $q_{t,1}=2^{t-1}$ and a final feasible cumulative
range containing $[0,2^n-1]$. All $2^n$ subset sums are distinct and survive
the truncation. This situation can occur, for example, when a final demand
of $2^n-1$ is preceded by zero demands and sufficiently large inventory
bounds. The example establishes exponential growth of this construction,
not NP-hardness of the exact restricted model.

Thus a bound on $m_t$ in every period does not justify treating $m$ as a
fixed constant in Corollary~\ref{cor:fixed-m}. Conversely, local thresholds
drawn from a fixed global set satisfy that corollary even when each period
uses a different subset. The distinction concerns threshold \emph{values},
not the number of rows in a price table.

\section{An inventory-lattice implementation}\label{sec:lattice}
Suppose demands, inventory bounds, all positive thresholds, and prescribed
terminal inventory are integer multiples of some $\delta>0$. Unit prices
and holding costs need not be multiples of $\delta$.

\begin{corollary}[Existence of a lattice optimum]\label{cor:lattice}
Under these assumptions, there is an optimal plan whose purchases and
inventories are all in $\delta\mathbb Z$.
\end{corollary}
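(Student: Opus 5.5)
The plan is to derive the corollary from Theorem~\ref{thm:critical-multi}. That theorem already gives an optimal plan whose cumulative purchases lie in the grid~\eqref{eq:grid-multi}, so it suffices to show that every point of that grid lies in $\delta\mathbb Z$. Purchases $x_t=Y_t-Y_{t-1}$ and inventories $I_t=Y_t-D_t$ then lie in $\delta\mathbb Z$ automatically, because $D_t\in\delta\mathbb Z$ by assumption.

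First I check that the anchor set $\A$ lies in $\delta\mathbb Z$.
\begin{itemize}
\item $0$ and each $D_t$ are sums of demands, so they are multiples of $\delta$.
\item Each $U_t=D_t+B_t$ is a multiple of $\delta$. This includes the prescribed terminal replacement $D_n+b$, since $b\in\delta\mathbb Z$ by hypothesis.
\item Each $\U_t$ is a minimum over a finite set of such values, hence equal to one of them, so $\U_t\in\delta\mathbb Z$.
\end{itemize}
Next, every element of $\Sset$ is a finite sum of zeros and positive thresholds $q_{t,k}$. These are all multiples of $\delta$, so $\Sset\subset\delta\mathbb Z$. Hence $a\pm s\in\delta\mathbb Z$ for all $a\in\A$ and $s\in\Sset$, and intersecting with $[0,H]$ preserves this.

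The only point needing care is that Theorem~\ref{thm:critical-multi} is stated for the general local-threshold model. I must confirm that the commensurability hypothesis covers every positive threshold actually used, including those above $H$. It does, because every positive threshold is assumed to be a multiple of $\delta$. Thresholds above $H$ are in any case never selected on a feasible plan, since every feasible order satisfies $x_t\le H$.

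For the free terminal policy no replacement of $U_n$ is made, and $U_n=D_n+B_n$ is still a multiple of $\delta$. The argument is therefore identical under both terminal policies.

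Prices and holding costs never enter this argument. They affect only which vertex is optimal, not the coordinates of the grid, which is consistent with the statement that they need not be multiples of $\delta$.

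There is no real obstacle here; the corollary is a direct integrality consequence of the critical-state theorem. If one wanted to avoid the grid altogether, an alternative would be to rerun the vertex argument of Theorem~\ref{thm:critical-one} directly. In that argument every component of active edges contains an anchor in $\delta\mathbb Z$, and every edge increment is $0$ or a threshold in $\delta\mathbb Z$. I would present the grid-based version, since it is shorter.
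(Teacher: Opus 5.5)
Your proposal is correct and follows essentially the same route as the paper. Both show that every anchor (including $\U_t$ and the prescribed terminal value $D_n+b$) and every local-choice threshold sum is a multiple of $\delta$, apply Theorem~\ref{thm:critical-multi}, and then take differences to obtain lattice purchases and inventories.
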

\begin{proof}
Every cumulative-demand value, tightened upper bound, and anchor is a
multiple of $\delta$. Every local-choice sum is also such a multiple.
Theorem~\ref{thm:critical-multi} therefore gives a cumulative-purchase
trajectory on this lattice. Differences give lattice purchases and inventories.
\end{proof}

Let $F_t(i)$ be the best cost for end inventory
$i\in\{0,\delta,2\delta,\ldots,B_t\}$. The initialization is
$F_0(0)=0$. For every feasible lattice target $i$,
\begin{equation}
 F_t(i)=h_ti+\min_k\left\{p_{t,k}(d_t+i)+
 \min_{\substack{u\in\{0,\delta,\ldots,B_{t-1}\}\\
                  u\le d_t+i-q_{t,k}}}
       \bigl(F_{t-1}(u)-p_{t,k}u\bigr)\right\},\label{eq:lattice}
\end{equation}
where at $t=1$ the predecessor set consists only of zero. The condition
inside the minimum is precisely the tier eligibility condition for
$x_t=d_t+i-u$. Every tier again uses an expanding prefix.

\begin{proposition}[Lattice complexity]\label{prop:lattice-complexity}
With $W=1+\max_t B_t/\delta$, the optimal value can be computed in
$O(PW)$ operations and $O(W)$ working space. Storing one predecessor per
state yields an optimal plan with $O(nW)$ working space.
\end{proposition}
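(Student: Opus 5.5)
Correctness comes first and is quick, using Corollary~\ref{cor:lattice}. By that corollary an optimal plan exists whose inventories lie in $\delta\mathbb Z$, so restricting the end-inventory states of period $t$ to $\{0,\delta,\ldots,B_t\}$ loses nothing. For a predecessor inventory $u$ and target $i$, the purchase is $x_t=d_t+i-u$, and the all-units cost is given by~\eqref{eq:eligible-multi}. Hence recurrence~\eqref{eq:lattice} is exactly the shortest-path recurrence on the layered lattice, by the same argument as Proposition~\ref{prop:grid-dp}. The answer is $F_n(b)$ under prescribed terminal inventory, or $\min_i F_n(i)$ under free terminal inventory. Nonnegativity of orders needs no separate check: tier $k=0$ has $q_{t,0}=0$, so the constraint $u\le d_t+i$ is built in, and higher tiers are stricter.

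For the running time I will implement each tier as in Algorithm~\ref{alg:layer}, but on the inventory lattice. Fix $t$ and $k$, and scan targets $i=0,\delta,\ldots,B_t$ in increasing order. The bound $d_t+i-q_{t,k}$ is nondecreasing in $i$, so a pointer $j$ over predecessors $u\in\{0,\delta,\ldots,B_{t-1}\}$ only advances. It maintains the running minimum $\beta$ of $F_{t-1}(u)-p_{t,k}u$. The threshold test $u\le d_t+i-q_{t,k}$ compares lattice multiples: since $d_t$ and $q_{t,k}$ are multiples of $\delta$, I precompute the integer index offset $(d_t-q_{t,k})/\delta$. Each predecessor then enters the prefix once and each target is visited once. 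One tier therefore costs $O(|\text{targets}|+|\text{predecessors}|)=O(W)$. Summing over the $m_t+1$ tiers of each period and over all periods gives $O(PW)$, plus $O(nW)$ for adding holding costs and initializing layers. Since $P\ge n$, that extra term is absorbed.

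Two frozen arrays of length at most $W$ suffice for the values, giving $O(W)$ working space. This matches the one-threshold argument, where both scans read the frozen previous layer.

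For plan recovery, whenever $C_i$ is improved I will also record the pair $(j^\ast,k)$. Here $j^\ast$ is the argmin index currently realizing $\beta$, which can be tracked alongside $\beta$ at no extra asymptotic cost. Storing one predecessor per state over $n$ layers uses $O(nW)$ space, and backtracking from the terminal state takes $O(n)$ steps. Backtracking yields consecutive inventories, hence purchases $x_t=d_t+I_t-I_{t-1}$, and the recorded tier is eligible by construction.

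The main obstacle is the precise accounting. I must make sure that thresholds larger than $d_t+B_t$ contribute empty prefixes without exceeding $O(W)$ per tier, which an early check handles. I must also confirm that $\max_t B_t/\delta$ is the correct width even though layers differ in size. Bounding every layer by $W$ settles the second point.
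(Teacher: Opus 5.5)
Your proposal is correct and follows the same route as the paper. Each tier performs one monotone prefix-minimum scan over two layers of at most $W$ lattice states, so the $P$ tiers cost $O(PW)$ time, and two frozen layers give $O(W)$ space; exactness comes from Corollary~\ref{cor:lattice}, and stored predecessor indices give recovery in $O(nW)$ space. Your extra accounting only fills in details the paper leaves implicit: you absorb the $O(nW)$ layer overhead using $P\ge n$, and you note that $q_{t,0}=0$ enforces $x_t\ge0$.
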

\begin{proof}
Every previous and current inventory layer has at most $W$ states.
A tier scans both layers once, giving $O(W)$ work, and the total tier count
is $P$. Corollary~\ref{cor:lattice} makes the final lattice optimum exact.
Predecessor indices give the stated direct recovery bound.
\end{proof}

This is a pseudopolynomial result: $W$ can be exponentially large in the
binary input length. It is nevertheless useful when stock capacities are
moderate in natural units. Choosing a coarser $\delta$ without checking that
all flow data are its multiples would invalidate the exactness claim.
The code raises an error rather than silently rounding incompatible data.
Unlike the cumulative coordinate, physical inventory need not be monotone;
we do not transfer the width-partition argument of
\ref{app:reconstruction} to this lattice implementation.

\section{A certificate from a monotone relaxation}\label{sec:certificate}
The cubic and fixed-$m$ algorithms are self-contained. An optional shortcut
can exploit a solver for a more structured relaxation. Throughout this
section, a \emph{monotone oracle} means an exact solver returning an optimal
purchase plan for the one-common-threshold model with non-increasing unit
prices, zero holding costs, and the specified inventory bounds. If this
oracle has $O(n\log n)$ arithmetic complexity, as stated in
\citet{papmono2026}, the accepted shortcut inherits that complexity.
This conditional use does not assume that the oracle can process an
arbitrary incoming value function or historical state updates.

\subsection{Removing holding costs and taking historical minima}
For the one-threshold instance, put
\begin{equation}
 \eta_t=\sum_{r=t}^n h_r,\qquad
 c_{\alpha,t}=p_{\alpha,t}+\eta_t,\qquad
 K=\sum_{t=1}^n h_tD_t\quad(\alpha=1,2).\label{eq:adjusted}
\end{equation}
Using $I_t=\sum_{s\le t}x_s-D_t$ gives
\begin{equation}
 \sum_t\bigl(f_t(x_t)+h_tI_t\bigr)
       =\sum_t\widetilde f_t(x_t)-K,\label{eq:holding-identity}
\end{equation}
where $\widetilde f_t$ has the same threshold and unit prices $c_{1,t},c_{2,t}$.
Now define
\begin{equation}
 \bar c_{\alpha,t}=\min_{j\le t}c_{\alpha,j},\label{eq:prefix-prices}
\end{equation}
and record the latest index $j_{\alpha,t}$ attaining each minimum.
The two sequences $\bar c_{\alpha,t}$ are non-increasing, with
$\bar c_{2,t}\le\bar c_{1,t}$. Replacing $c$ by $\bar c$ and setting holding
costs to zero relaxes the objective while retaining the feasible set.
If its optimal purchase cost is $R^*$, then
\begin{equation}
 L=R^*-K\le\OPT.\label{eq:lower-bound}
\end{equation}

\subsection{Whole-order reconstruction}
Let $z_t$ be an optimal relaxed order, and let $\alpha_t=1$ if $z_t<Q$
and $\alpha_t=2$ otherwise. Move each whole order to its recorded source:
\begin{equation}
 x'_j=\sum_{t:\,j_{\alpha_t,t}=j}z_t.\label{eq:reconstruction}
\end{equation}
Whole orders, rather than arbitrary fractions of discounted orders, are
moved so that the common threshold remains satisfied at the source.

\begin{lemma}[Cost of merging orders]\label{lem:subadditivity}
An all-units procurement cost with non-increasing unit prices satisfies
$f(a+b)\le f(a)+f(b)$ for all $a,b\ge0$. The same statement holds after
adding a nonnegative linear holding adjustment to every tier price.
\end{lemma}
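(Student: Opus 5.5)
The plan is to write the procurement cost in the eligible-line form~\eqref{eq:eligible-multi}, $f(x)=\min_{k:\,q_k\le x}p_kx$, with $0=q_0<q_1<\cdots$ and $p_0\ge p_1\ge\cdots\ge0$. The single-threshold form~\eqref{eq:onecost} is the special case $m=1$. The argument does not use the number of tiers, and the adjusted prices $c_{\alpha,t}$ of~\eqref{eq:adjusted} keep the same thresholds. So one argument covers both sentences of the lemma once we note that adding a common constant $\eta\ge0$ to every tier price preserves the non-increasing order: $p_k+\eta\ge p_{k+1}+\eta\ge0$.

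The core step is monotonicity of the realized unit price. Let $\pi(x)$ be the price of the largest eligible index, $\max\{k:q_k\le x\}$. If $x\le y$, every tier eligible at $x$ is eligible at $y$, so the largest eligible index at $y$ is at least that at $x$, and therefore $\pi(y)\le\pi(x)$.

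Apply this with $x=a$ or $x=b$ and $y=a+b$; both $a\le a+b$ and $b\le a+b$ hold because $a,b\ge0$. Then
\[
f(a+b)=\pi(a+b)\,a+\pi(a+b)\,b\le\pi(a)\,a+\pi(b)\,b=f(a)+f(b),
\]
where the inequality uses $a,b\ge0$ to multiply the price inequalities. For the holding adjustment, we can either rerun this argument with prices $p_k+\eta$, or observe that $\eta(a+b)=\eta a+\eta b$ is additive and adding it to a subadditive function preserves subadditivity. In fact the shifted cost is exactly $f(x)+\eta x$, since the eligible index set does not change.

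There is no real obstacle. The only care needed is at the threshold convention: equality $x=q_k$ must receive the lower price, so eligibility is the closed condition $q_k\le x$, which is what makes the inclusion of eligible sets under $x\le y$ valid. I would also remark that $a=0$ or $b=0$ is trivial because $f(0)=0$. I would note that without non-increasing prices within a period the claim fails, which is why the lemma is stated for this ordering.
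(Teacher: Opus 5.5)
Your proof is correct and is essentially the paper's argument: the realized unit price is non-increasing in the order quantity, so multiplying the price comparisons by $a,b\ge0$ and adding gives $f(a+b)\le f(a)+f(b)$, and the holding adjustment is simply an added linear term. One minor aside: the inclusion of eligible tier sets under $x\le y$ would hold just as well with strict eligibility $q_k<x$, so the closed threshold convention is needed only to match the model's definition of $f$, not for that step.
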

\begin{proof}
For positive $a$ and $b$, the unit price at quantity $a+b$ is no greater
than the unit price at either constituent quantity. Multiplication by
$a$ and $b$ and addition gives the inequality. If one quantity is zero,
the claim follows from $f(0)=0$. Adding a linear term preserves it.
\end{proof}

The adjusted source cost of each moved order is its relaxed cost. Merging
orders at one source cannot increase that sum by
Lemma~\ref{lem:subadditivity}. Equation~\eqref{eq:holding-identity} then gives
$C(x')\le L$, where the expression $C$ can be evaluated even before
capacity feasibility is checked. Purchases only move backward, so
\begin{equation}
 \sum_{j\le t}x'_j\ge\sum_{j\le t}z_j\ge D_t.\label{eq:no-shortage}
\end{equation}
Total purchasing is unchanged. Thus shortages cannot be introduced and
terminal inventory is preserved, but an intermediate upper inventory bound
may be violated.

\begin{theorem}[One-threshold certificate]\label{thm:certificate-one}
If the reconstructed plan in~\eqref{eq:reconstruction} satisfies
$\sum_{j\le t}x'_j-D_t\le B_t$ for every $t$, it is globally optimal.
The certificate requires one monotone-oracle call and $O(n)$ additional
operations.
\end{theorem}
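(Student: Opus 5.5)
The plan is to sandwich the cost of the reconstructed plan between the lower bound~\eqref{eq:lower-bound} and $\OPT$. If $x'$ is feasible, then $\OPT\le C(x')$. I will also show $C(x')\le L$, and combining this with $L\le\OPT$ forces $C(x')=\OPT$. Everything therefore reduces to three checks: feasibility of $x'$, the inequality $C(x')\le L$, and the validity of $L$ as a lower bound.

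\emph{Lower bound.} By~\eqref{eq:holding-identity}, the true objective equals $\sum_t\widetilde f_t(x_t)-K$ on every feasible plan. Pointwise we have $\bar c_{\alpha,t}\le c_{\alpha,t}$ with the same threshold, so the relaxed cost of each order is at most its adjusted cost. The relaxed feasible set coincides with the original one, since the same inventory bounds and terminal policy apply. Hence $R^*-K\le\OPT$. The relaxed prices are non-increasing in $t$, satisfy $\bar c_{2,t}\le\bar c_{1,t}$, and carry zero holding cost. The instance is therefore admissible for the monotone oracle, which returns $z$ with relaxed cost $R^*$.

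\emph{Upper bound on $C(x')$.} Let $\alpha_t$ be the tier used by $z_t$ in the relaxation. Then $\bar c_{\alpha_t,t}=c_{\alpha_t,j}$ with $j=j_{\alpha_t,t}$. Placing the whole quantity $z_t$ at period $j$ keeps it eligible for tier $\alpha_t$: if $\alpha_t=2$, then $z_t\ge Q$. Under the all-units rule with adjusted prices at $j$, $\widetilde f_j(z_t)\le c_{\alpha_t,j}z_t$, because the actual tier at $j$ is at least as cheap as any eligible one, using $p_{2,j}\le p_{1,j}$. This cost equals the relaxed cost of $z_t$. Several orders may be sent to the same $j$. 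Lemma~\ref{lem:subadditivity}, applied to the adjusted prices $c_{\alpha,j}=p_{\alpha,j}+\eta_j$, gives $\widetilde f_j(x'_j)\le\sum_{t:j_{\alpha_t,t}=j}\widetilde f_j(z_t)$. Summing over $j$ yields $\sum_j\widetilde f_j(x'_j)\le R^*$. Identity~\eqref{eq:holding-identity} holds for any nonnegative order vector with inventories defined by balance, so $C(x')\le R^*-K=L$.

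\emph{Feasibility.} Each order moves to an index $j\le t$, so the cumulative sums satisfy~\eqref{eq:no-shortage}, giving $I'_t\ge0$. Total purchases are unchanged, so $I'_n=I_n(z)$ and the terminal policy is respected. In the prescribed case this includes the equality $I'_n=b$; in the free case it gives $I'_n\le B_n$, consistent with the hypothesis at $t=n$. The upper bounds are exactly the certificate hypothesis, so $x'$ is feasible. The feasibility check is the only delicate step, and it is the reason the conclusion is conditional: backward moves may break intermediate upper bounds, and the theorem simply assumes they do not.

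\emph{Complexity.} Computing $\eta_t$ and $K$ takes one backward pass. The prefix minima $\bar c_{\alpha,t}$ and the latest argmin indices $j_{\alpha,t}$ take one forward pass, updating the index when the new value is $\le$ the current minimum. After the oracle call, determining $\alpha_t$, accumulating~\eqref{eq:reconstruction}, and checking the cumulative bounds each take $O(n)$ operations.
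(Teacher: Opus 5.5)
Your proposal is correct and follows essentially the same route as the paper: the sandwich $L\le\OPT\le C(x')\le L$, with $C(x')\le L$ obtained from whole-order moves to the prefix-minimum sources, Lemma~\ref{lem:subadditivity} applied to the adjusted prices, and identity~\eqref{eq:holding-identity}, and feasibility obtained from~\eqref{eq:no-shortage}, preserved total purchasing, and the capacity hypothesis. One small point: $\widetilde f_j(z_t)\le c_{\alpha_t,j}z_t$ actually holds with equality, because the tier at $j$ is determined by comparing $z_t$ with $Q$ alone, so you do not need to appeal to $p_{2,j}\le p_{1,j}$.
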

\begin{proof}
The reconstruction is feasible by the capacity test,
\eqref{eq:no-shortage}, and preservation of terminal inventory. Therefore
\[
 L\le\OPT\le C(x')\le L,
\]
so equality holds throughout. Prefix minima, source assignment, cumulative
inventory, and the capacity test all require linear time.
\end{proof}

For already non-increasing original prices and nonnegative holding costs,
$c_{\alpha,t}$ is also non-increasing. Choosing the latest minimizer yields
$j_{\alpha,t}=t$, so the certificate succeeds without moving any order.
For genuinely unlimited storage and prescribed terminal inventory, every
backward move is feasible and the certificate again succeeds. One can
represent this case with the redundant cumulative-purchase bound
$Y_t\le D_n+b$ before calling a finite-bound oracle.

With finite capacities, acceptance is not guaranteed. For example, take
$Q=4$, $d=(1,4)$, $B=(1,4)$, prices $p_1=(2,8)$ and $p_2=(1,4)$,
no holding cost, and zero terminal inventory. The relaxation orders
$(1,4)$ at cost $6$. Moving both orders to period 1 produces $(5,0)$,
whose first ending inventory is $4>B_1$. The original optimum costs $18$.
A rejected reconstruction does not imply that every possible reconstruction
fails, or that the relaxation necessarily has a strict optimality gap.

\subsection{Terminal policies and the oracle interface}
A zero-terminal oracle suffices. For prescribed $I_n=b$, replace the last
relaxed demand by $d_n+b$ and its final end-inventory bound by zero. Order
quantities and relaxed purchasing costs are unchanged under this mapping.
For free terminal inventory, append a period with demand $B_n$, zero
prices, zero holding cost, and zero ending inventory. Any admissible
original terminal inventory is completed by a free purchase. Appending
zero prices preserves monotonicity. Discard that dummy purchase after the
oracle call. These operations are applied to the zero-holding relaxation;
the constant $K$ in~\eqref{eq:adjusted} still refers to the original periods.

\subsection{A one-threshold certificate for multi-threshold instances}
When positive thresholds exist, let $Q_* =\min_{t,k\ge1}q_{t,k}$ and define
\begin{equation}
 g_t(x)=\begin{cases}
 p_{t,0}x,&x<Q_*,\\
 p_{t,m_t}x,&x\ge Q_*.
 \end{cases}\label{eq:coarsened}
\end{equation}
Then $g_t(x)\le f_t(x)$: below $Q_*$ no positive threshold has been reached,
and above it the surrogate uses the lowest local price, possibly too early.
If there are no positive thresholds, use identical tier prices and any
positive $Q_*$. Applying~\eqref{eq:adjusted}--\eqref{eq:prefix-prices} to
this surrogate yields another monotone one-threshold relaxation and a
lower bound $L$.

Reconstruct whole orders as before, but evaluate the candidate with the
\emph{original} thresholds. Capacity feasibility alone is no longer sufficient.
The appropriate acceptance test is
\begin{equation}
 x'\text{ satisfies all original constraints},\qquad C(x')=L.
 \label{eq:multi-certificate}
\end{equation}
Whenever~\eqref{eq:multi-certificate} holds, $L\le\OPT\le C(x')=L$ proves
optimality. With an $O(n\log n)$ oracle, this attempted certificate takes
$O(P+n\log n)$ operations, including reading and evaluating the tier data.

The extra cost test is essential. With two periods, $d=(0,2)$, $B=(10,10)$,
zero holding and terminal inventory, take local thresholds $(10,2)$ and
price pairs $((10,1),(9,2))$. The surrogate relaxation has value $2$.
Its two-unit order can be moved to period 1 without exceeding capacity,
but it does not reach that period's true threshold of ten. The actual
reconstructed cost is $20$, while the original optimum costs $4$.
The full test correctly rejects this candidate.

The complete certified procedure tries the relaxation, returns a candidate
only when its appropriate certificate succeeds, and otherwise invokes the
critical-grid or lattice algorithm. All worst-case bounds for the fallback
remain unchanged. No unconditional near-linear algorithm is asserted for
arbitrary temporal prices or multiple thresholds.

\section{Implementation and computational validation}\label{sec:validation}
\subsection{Reference implementations and reproducibility}
The supplementary package contains exact-rational Python implementations
of the one-threshold and multi-threshold critical-grid algorithms. The
multi-threshold implementation also provides the lattice algorithm and
an optional certificate wrapper. Rational data are represented by
\texttt{fractions.Fraction}. Threshold equality is resolved by selecting
the newly eligible lower tier. Grid and sum-set duplicates are removed
by sorted merging. An explicit state-count guard raises an exception if
its user-selected limit is exceeded; it never silently truncates the grid
or returns an approximate answer.

The critical-grid code computes a value using two layers and reconstructs
orders by the midpoint procedure of \ref{app:reconstruction}.
Every recovered plan is checked against the original inventory balances,
bounds, terminal policy, and objective. The lattice implementation uses
ordinary predecessor tables for plan recovery. The code does not include
the augmented-tree monotone algorithm: certificate tests use the exact
cubic implementation as a reference oracle. Consequently these tests
validate the reduction and certificate, not a near-linear implementation.

The results below were regenerated from the supplied scripts for this
manuscript. The package records pseudorandom seeds, generators, exact
inputs for the timing study, per-instance median timing records, machine-readable reports,
and environment information. Python 3.13.5 and SciPy 1.17.0 were used;
SciPy is required only for the independent linear-programming checks.

\subsection{Exact-answer checks}
We used two independently expressed comparison methods. First, an exhaustive
inventory-state dynamic program enumerates every feasible predecessor on
the specified small flow lattice and evaluates the actual all-units cost.
It does not use critical-state pruning or tierwise prefix minima.
Second, a tier-enumeration procedure solves one linear program for every
choice of actual tier intervals on small rational instances. Its variables
are order quantities, with cumulative inventory constraints. Tier intervals
are closed for these checks: closing an earlier tier at its upper threshold
can only overcharge the true lower cost there, while the correct adjacent
tier is separately enumerated. The minimum over all selections is thus
unchanged. This comparison does not use the critical grid.

\begin{table}[htbp]\centering\small
\caption{Regenerated validation checks. Counts describe tests, not claims
that every randomly generated instance is distinct.}
\label{tab:validation}
\begin{tabular}{p{85mm}rr}
\toprule
Check&\shortstack{One\\threshold}&\shortstack{Multiple/local\\thresholds}\\
\midrule
Integer-flow cases versus exhaustive inventory DP&1,000&1,000\\
Rational-lattice cases versus exhaustive inventory DP&333&500\\
Compatibility with one-threshold implementation&---&100\\
Certificate/fallback cases versus exact solver&333&200\\
Monotone cases with required certificate acceptance&100&---\\
Rational cases versus tier-enumeration LPs&100&60\\
Tier assignments enumerated in those LP tests&2,036&658\\
Named regression checks&6&6\\
\bottomrule
\end{tabular}
\end{table}

All exact-answer comparisons passed. All lattice instances in the
multi-threshold suite were also solved by both value-only and
plan-recovering variants of the lattice solver. The one-threshold suite
uses seed $251011368$, and the multi-threshold suite uses seed
$2510113682026$. The latter includes shared and period-specific thresholds,
zero-price ties, zero demands and capacities, and both prescribed and free
terminal inventory. The rational LP tests use independently generated
fractions rather than only rescaled integer instances. Their tolerance is
$10^{-7}\max\{1,|\text{LP objective}|\}$; all other objective comparisons use
exact rational equality. Complete parameter ranges appear in the scripts
and supplementary notes.

Of the 333 one-threshold certificate cases, 222 were accepted and 111 were
rejected; all 100 separately generated monotone cases were accepted.
Of the 200 multi-threshold certificate cases, 64 were accepted and 136
were rejected. Ninety-one of those rejections occurred without a capacity
violation, illustrating that objective equality under the original tiers
is an independent requirement. These are frequencies for the stated
small-instance generators, not estimates of acceptance rates in operational
data. Finite tests supplement, rather than replace, the proofs.

\subsection{An illustrative comparison of the two implementations}
A separate synthetic timing study compares objective-only critical-grid and
inventory-lattice solves. It uses $n\in\{20,40,80\}$, $m\in\{1,2,3\}$,
and three replications per combination of horizon, breakpoint count, and
capacity regime, giving 54 instances. The common thresholds are the first
$m$ values of $(7,13,23)$. Demands are independent integers from 1 to 8;
prices are independent integers from 1 to 30, sorted within each period;
and holding costs are integers from 0 to 3. Tight capacities are integers
from 0 to 6 and wide capacities from 15 to 30. Terminal inventory is zero.
The pseudorandom seed is $20261015$.

Each implementation is timed three times per instance after a warm-up;
method order alternates and the median elapsed time is retained.
Timing includes critical-grid construction where applicable, but excludes
instance generation, imports, and plan recovery. The tables aggregate those
per-instance medians geometrically. Both methods use the same exact-arithmetic
representation, and all 54 pairs of objective values agree exactly.

\begin{table}[htbp]\centering
\caption{Illustrative objective-only timings on 54 integer-flow instances. Each row pools nine instances; times are geometric means of per-instance median seconds. The ratio is critical-grid time divided by lattice time.}
\label{tab:timing}
\small
\begin{tabular}{clrrrrr}
\toprule
$m$ & Capacity & $M$ range & $W_{\max}$ & Grid (s) & Lattice (s) & Ratio\\
\midrule
1 & Tight & 74--396 & 7 & 0.0505 & 0.0018 & 28.34\\
1 & Wide & 70--381 & 31 & 0.0588 & 0.0122 & 4.82\\
2 & Tight & 74--375 & 7 & 0.1366 & 0.0020 & 68.73\\
2 & Wide & 99--385 & 31 & 0.1548 & 0.0148 & 10.48\\
3 & Tight & 68--384 & 7 & 0.1782 & 0.0020 & 90.16\\
3 & Wide & 67--386 & 31 & 0.2047 & 0.0172 & 11.87\\
\bottomrule
\end{tabular}
\end{table}

The run used Python 3.13.5 on Linux x86\_64 with an Intel Xeon Platinum 8370C processor (2.80\,GHz, shared container). The archived environment records the exact versions and per-instance timing records.

The experiment deliberately includes moderate integer storage capacities,
so the inventory-lattice width is small. It illustrates an implementation
choice rather than a claim of universal superiority of the critical grid.
On these data, repeated threshold sums coalesce; the experiment therefore
does not probe the worst-case $n^{m+1}$ growth of the critical-state set.
The code is a transparent Python reference implementation, not an optimized
solver, and shared-container timings should not be interpreted as stable
hardware performance estimates. No published implementation, commercial
mixed-integer optimizer, or near-linear oracle is benchmarked.

\section{Discussion and conclusions}\label{sec:conclusion}
For one common all-units threshold, arbitrary time-dependent prices can be
handled by an exact cubic dynamic program with quadratic working space.
The method uses a finite set of cumulative-purchase states supported by an
extreme-point argument and evaluates each period with two prefix minima.
It neither assumes continuity at a discount threshold nor permanently
prunes an inventory history from a comparison of marginal prices alone.
Arbitrary end-of-period bounds contribute a linear number of anchors.

With a fixed global set of $m$ threshold values, the same construction gives
$O_m(n^{m+2})$ time and $O_m(n^{m+1})$ working space, including a purchase plan.
This supplies a direct all-units implementation within the family of known
fixed-breakpoint polynomial bounds, not a priority claim for the exponent.
For unrestricted local thresholds, the method remains exact but its sum set
may be exponential. A bounded local tier count alone does not remove this
limitation. For moderate commensurate inventories, a pseudopolynomial
lattice implementation can be preferable.

The monotone-price relaxation provides an additional, logically separate
tool. It is useful whenever a reconstructed plan attains the relaxation's
lower bound, but its acceptance cannot be assumed with finite storage or
coarsened thresholds. The exact fallback provides the worst-case guarantee.
The reproducible computational checks support the implementations without
turning a conditional shortcut into an unconditional complexity result.

Several questions remain. Tighter critical-state sets may improve the
practical range of the fixed-$m$ method, and a smaller worst-case bound may
be possible for the one-threshold arbitrary-price case. Unrestricted
period-dependent thresholds require either a different polynomial-size
representation or a separate complexity classification; exponential subset
sums alone are not such a classification. Independent order limits, setup
charges, backlogging, and nonlinear holding costs change the model and
would require new statements and proofs rather than an automatic reuse
of the bounds reported here.

\section*{Data and code availability}
The Python source, synthetic instance generators, exact timing inputs,
per-instance median timing records, and regenerated validation reports accompany this
manuscript as supplementary material. No external operational dataset is
used. A persistent repository identifier should be added when the author
archives the final, reviewed version.

\section*{Declarations}
This manuscript is a work in progress and is subject to
further revisions, refinements, and corrections.
Artificial intelligence (AI) tools have been used
to assist in its preparation, including aspects of
writing, editing, and content development.
The author(s) retain full responsibility for the
accuracy, originality, and validity of the presented
material, including any errors or omissions.
The findings, interpretations, and conclusions
should be regarded as preliminary and not as
definitive results.
\section*{Funding and competing interests}
Funding information and the competing-interest declaration are to be
provided and confirmed by the author before submission.

\appendix
\renewcommand{\thetheorem}{\Alph{section}.\arabic{theorem}}
\section{Space-efficient recovery of an optimal plan}\label{app:reconstruction}
This appendix proves the recovery claim without storing all $n$ value
layers. The construction uses the forward--backward divide-and-conquer
principle associated with \citet{hirschberg1975}. The additional argument
here is the partition of cumulative-coordinate widths, which prevents an
extra logarithmic factor in the time bound.

\subsection{Forward and backward evaluations}
A recovery subproblem specifies boundary times $a<b$ and endpoint grid
indices $\ell\le r$, meaning $Y_a=v_\ell$ and $Y_b=v_r$.
Cumulative purchasing is nondecreasing, so all intermediate coordinates
belong to $\{v_\ell,\ldots,v_r\}$. Forward layers use
\eqref{eq:dp-multi} on this restricted grid and initialize cost zero only
at $v_\ell$. Their holding costs are charged at the ending boundary of
each transition.

For backward evaluation, let $B_t(u)$ be the minimum cost in periods
$t+1,\ldots,b$, conditional on $Y_t=u$ and $Y_b=v_r$. Initialize
$B_b(v_r)=0$ and all other values to $+\infty$. The backward recurrence is
\begin{align}
 B_{t-1}(u)=\min_k\left\{-p_{t,k}u+
 \min_{\substack{v\in\V\cap[v_\ell,v_r]\\
                 v\ge u+q_{t,k},\ D_t\le v\le\U_t}}
 \bigl(B_t(v)+p_{t,k}v+h_t(v-D_t)\bigr)\right\}.
 \label{eq:backward}
\end{align}
It is evaluated only at feasible source coordinates. For a fixed tier,
scan $u$ in decreasing order and maintain the minimum over the expanding
suffix $v\ge u+q_{t,k}$. A backward layer consequently has the same
linear-per-tier complexity as a forward layer.

\subsection{Midpoint selection and recursion}
When $a<b-1$, choose $c=\lfloor(a+b)/2\rfloor$. Compute forward costs
$A_c(v)$ for periods $a+1,\ldots,c$ and backward costs $B_c(v)$ for
periods $c+1,\ldots,b$. Choose a minimizer
\begin{equation}
 v_s\in\argmin_{v\in\V\cap[v_\ell,v_r]}
                 \{A_c(v)+B_c(v)\}.\label{eq:midpoint}
\end{equation}
There is no duplicated holding charge because the backward cost starts
strictly after period $c$. Release both value arrays, then solve the
subproblems $(a,c,\ell,s)$ and $(c,b,s,r)$.
When $b=a+1$, set $x_b=v_r-v_\ell$. When $\ell=r$, all orders in the
subproblem are zero, so no further dynamic program is needed. Initialize
the complete output vector to zero once before recursing. Feasibility of
a selected subproblem follows from the finite costs in~\eqref{eq:midpoint}.
At the root, its terminal index is selected by the main forward solve.

\begin{proposition}[Recovery complexity]\label{prop:recovery}
For a grid of size $M$, recovery takes $O(nK_{\max}M)$ operations and
$O(M+n)$ auxiliary space. For the one-threshold case this is $O(nM)$ time.
\end{proposition}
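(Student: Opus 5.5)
The plan is to bound the recursion tree of the midpoint procedure level by level. The key fact is that the index intervals $[\ell,r]$ of subproblems at one depth overlap in at most one endpoint. Each subproblem $(a,b,\ell,r)$ is charged for its forward and backward evaluations. These involve $b-a$ layers on the restricted grid $\V\cap[v_\ell,v_r]$, which has $r-\ell+1$ points. By the tierwise prefix-minimum analysis, a forward or backward layer for period $t$ costs $O((m_t+1)(r-\ell+1))$, and this is $O(K_{\max}(r-\ell+1))$. The subproblem therefore costs $O(K_{\max}(b-a)(r-\ell+1))$, plus $O(r-\ell+1)$ for the $\argmin$ in~\eqref{eq:midpoint}. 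Leaves with $b=a+1$ or $\ell=r$ cost $O(1)$ each, and there are $O(n)$ of them, since each time boundary is split off at most once.

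\emph{Width partition at a fixed depth.} At recursion depth $d$, the time intervals $(a,b]$ are disjoint and each has length at most $\lceil n/2^d\rceil$. Monotonicity of $Y$ and the way children are formed ($(a,c,\ell,s)$ and $(c,b,s,r)$) give $\ell_1\le r_1=\ell_2\le r_2=\cdots$ along the time-ordered list of subproblems at that depth. Hence $\sum(r-\ell+1)\le M+(\text{number of subproblems at depth }d)\le M+2^d$. Since $2^d\le 2n$ is harmless, the total work at depth $d$ is
\[
O\Bigl(K_{\max}\,\lceil n/2^d\rceil\,(M+2^d)\Bigr)=O\bigl(K_{\max}(nM/2^d+n+M)\bigr).
\]
Summing over $O(\log n)$ depths, the $nM/2^d$ terms form a geometric series totalling $O(nK_{\max}M)$. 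This is the point where partitioning the widths, rather than bounding each subproblem by $M$, removes the logarithmic factor.

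\emph{The main obstacle} is the residual $O(K_{\max}(n+M)\log n)$ coming from the additive $+1$ overlaps and the ceilings. The first attempt will be to charge more carefully: a subproblem with $b-a$ periods costs $\sum_{t=a+1}^{b}(m_t+1)\cdot(r-\ell+1)$. Summed over a depth, the overlap part becomes $\sum_t(m_t+1)=P\le nK_{\max}$ per depth, so the residual is $O(P\log n)$. Since $M\ge n+1$ once $D_t$ are anchors (or in general, since $M\ge\log n$ can be assumed when distinct anchors exist), I would argue that $P\log n=O(nK_{\max}M)$. If that fails, the fallback is to absorb it as $O(nK_{\max}\log n)\subseteq O(nK_{\max}M)$, using $M\ge\log_2 n$. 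This holds whenever the grid contains distinct $D_t$ values; degenerate cases with $M$ tiny make the $\ell=r$ shortcut terminate immediately. Using the per-period tier count instead of $K_{\max}$ costs nothing, and for one threshold $K_{\max}=2$ gives $O(nM)$.

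\emph{Space.} At any time, only the value arrays of the current subproblem exist. They are released before recursing, and their size is at most $r-\ell+1\le M$, with two arrays plus prefix-minimum scalars. The recursion stack has depth $O(\log n)$ with $O(1)$ indices per frame, and the output vector uses $O(n)$, giving $O(M+n)$ in total. Correctness of the chosen midpoints is not part of the claim, but I would note that finite $A_c+B_c$ at $v_s$ guarantees that both children are feasible.
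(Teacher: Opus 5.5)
Your overall plan is the paper's: charge each recursion depth, use that time intervals at depth $d$ are disjoint with length at most $\lceil n/2^d\rceil$ and that the index intervals $[\ell,r]$ at one depth meet only at endpoints, then sum over depths. The gap is in the additive term. Your bound $\sum(r-\ell+1)\le M+2^d$ charges one extra coordinate to every subproblem at depth $d$. This costs $\Theta(K_{\max}n)$ per depth and leaves a residual $O(K_{\max}n\log n)$; the $M\log n$ part is harmless because $\log n\le n$.

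To absorb the residual you invoke $M\ge n+1$ or $M\ge\log_2 n$, and neither holds in general. Cumulative demands coincide when demands are zero, and duplicates are removed from the grid. For instance, $d_1=1$, $d_t=0$ for $t\ge2$, $B_t=0$, $I_n=0$ and $Q\ge1$ give $\V=\{0,1\}$, so $M=2$ for every $n$. Your remark that a tiny $M$ makes the $\ell=r$ shortcut fire immediately is also false here. The root has $\ell<r$, and at every depth one child of width $1$ keeps recursing down to single periods, about $\log_2 n$ levels. For such instances your argument yields only $O(K_{\max}n\log n)$, not the claimed $O(nK_{\max}M)$.

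The missing observation is that the extra coordinate per subproblem is already $O(M)$ per depth. Subproblems with $\ell=r$ or $b=a+1$ run no value layers. For the remaining ones, $r-\ell\ge1$ gives $r-\ell+1\le2(r-\ell)$. The widths $r-\ell$ at one depth sum to at most $M-1$, so the coordinates processed at depth $d$ total at most $2(M-1)$, independent of the number of subproblems. Equivalently, at most $M-1$ subproblems at any depth have positive width, so each depth $d\ge1$ has at most $2(M-1)$ subproblems.

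The work at depth $d$ is then $O(K_{\max}M\lceil n/2^d\rceil)$. Since $\sum_{d=0}^{\lceil\log_2 n\rceil}\lceil n/2^d\rceil\le 2n+\lceil\log_2 n\rceil+1=O(n)$, the total is $O(nK_{\max}M)$ with no assumption on $M$. This is the width argument in the paper's proof. Your space analysis is fine as written.
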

\begin{proof}
Correctness follows by splitting an optimal path at~\eqref{eq:midpoint}.
Any cheaper child path would improve the parent path, and the concatenation
of two minimizing child paths attains the parent's optimum.

For the running time, define a subproblem's coordinate width as $w=r-\ell$,
not its number of grid points. The two child widths sum to $w$.
Zero-width subproblems require no value-layer computation. For positive
width, the number of processed coordinates is $w+1\le2w$. At every
recursion depth, the sum of the positive widths is at most the root width,
which is at most $M-1$. A time interval at depth $d$ has length at most
$\lceil n/2^d\rceil$. Thus the value-layer work at that depth is
\[
 O\!\left(K_{\max}M\left\lceil\frac{n}{2^d}\right\rceil\right).
\]
Summing over $d=0,\ldots,\lceil\log_2 n\rceil$ gives
$O(nK_{\max}M)$ because the sum of these ceilings is $O(n)$.
The recursion has $O(n)$ nodes and writes at most $n$ purchases.

Only the arrays for the current subproblem need be retained: parent
arrays are released before either child is entered. They occupy $O(M)$
space, the output occupies $O(n)$, and depth-first recursion uses
$O(\log(n+1))$ additional bookkeeping. The original sorted grid is shared
by all subproblems.
\end{proof}



\begin{thebibliography}{11}
\expandafter\ifx\csname natexlab\endcsname\relax\def\natexlab#1{#1}\fi
\providecommand{\url}[1]{\texttt{#1}}
\providecommand{\href}[2]{#2}
\providecommand{\path}[1]{#1}
\providecommand{\DOIprefix}{doi:}
\providecommand{\ArXivprefix}{arXiv:}
\providecommand{\URLprefix}{URL: }
\providecommand{\Pubmedprefix}{pmid:}
\providecommand{\doi}[1]{\href{http://dx.doi.org/#1}{\path{#1}}}
\providecommand{\Pubmed}[1]{\href{pmid:#1}{\path{#1}}}
\providecommand{\bibinfo}[2]{#2}
\ifx\xfnm\relax \def\xfnm[#1]{\unskip,\space#1}\fi
\bibitem[{Down et~al.(2021)Down, Karakostas, Kolliopoulos and
  Rostami}]{down2021}
\bibinfo{author}{Down, D.G.}, \bibinfo{author}{Karakostas, G.},
  \bibinfo{author}{Kolliopoulos, S.G.}, \bibinfo{author}{Rostami, S.},
  \bibinfo{year}{2021}.
\newblock \bibinfo{title}{Single-item lot-sizing with quantity discount and
  bounded inventory}.
\newblock \bibinfo{journal}{Operations Research Letters} \bibinfo{volume}{49},
  \bibinfo{pages}{877--882}.
\newblock \DOIprefix\doi{10.1016/j.orl.2021.11.002}.
\bibitem[{Hirschberg(1975)}]{hirschberg1975}
\bibinfo{author}{Hirschberg, D.S.}, \bibinfo{year}{1975}.
\newblock \bibinfo{title}{A linear space algorithm for computing maximal common
  subsequences}.
\newblock \bibinfo{journal}{Communications of the ACM} \bibinfo{volume}{18},
  \bibinfo{pages}{341--343}.
\newblock \DOIprefix\doi{10.1145/360825.360861}.
\bibitem[{Koca et~al.(2014)Koca, Yaman and Akt{\"u}rk}]{koca2014}
\bibinfo{author}{Koca, E.}, \bibinfo{author}{Yaman, H.},
  \bibinfo{author}{Akt{\"u}rk, M.S.}, \bibinfo{year}{2014}.
\newblock \bibinfo{title}{Lot sizing with piecewise concave production costs}.
\newblock \bibinfo{journal}{INFORMS Journal on Computing} \bibinfo{volume}{26},
  \bibinfo{pages}{767--779}.
\newblock \DOIprefix\doi{10.1287/ijoc.2014.0597}.
\bibitem[{Love(1973)}]{love1973}
\bibinfo{author}{Love, S.F.}, \bibinfo{year}{1973}.
\newblock \bibinfo{title}{Bounded production and inventory models with
  piecewise concave costs}.
\newblock \bibinfo{journal}{Management Science} \bibinfo{volume}{20},
  \bibinfo{pages}{313--318}.
\newblock \DOIprefix\doi{10.1287/mnsc.20.3.313}.
\bibitem[{Ou(2017)}]{ou2017}
\bibinfo{author}{Ou, J.}, \bibinfo{year}{2017}.
\newblock \bibinfo{title}{Improved exact algorithms to economic lot-sizing with
  piecewise linear production costs}.
\newblock \bibinfo{journal}{European Journal of Operational Research}
  \bibinfo{volume}{256}, \bibinfo{pages}{777--784}.
\newblock \DOIprefix\doi{10.1016/j.ejor.2016.06.040}.
\bibitem[{Papadopoulos(2026a)}]{papmonge2026}
\bibinfo{author}{Papadopoulos, K.}, \bibinfo{year}{2026}a.
\newblock \bibinfo{title}{Monge matrix searching for lot sizing with
  piecewise-concave production costs}.
\newblock \URLprefix \url{https://arxiv.org/abs/2610.03277v1}.
  \bibinfo{note}{arXiv:2610.03277v1. Preprint; 2 October 2026}.
\bibitem[{Papadopoulos(2026b)}]{pappl2026}
\bibinfo{author}{Papadopoulos, K.}, \bibinfo{year}{2026}b.
\newblock \bibinfo{title}{A novel exact algorithm for economic lot-sizing with
  piecewise linear production costs}.
\newblock \bibinfo{journal}{Journal of Global Optimization}
  \bibinfo{volume}{96}, \bibinfo{pages}{171--205}.
\newblock \DOIprefix\doi{10.1007/s10898-026-01636-3}.
\bibitem[{Papadopoulos(2026c)}]{papmono2026}
\bibinfo{author}{Papadopoulos, K.}, \bibinfo{year}{2026}c.
\newblock \bibinfo{title}{An {$O(n\log n)$} algorithm for single-item lot
  sizing with a one-breakpoint all-units discount and non-increasing prices}.
\newblock \URLprefix \url{https://arxiv.org/abs/2510.11368v4}.
  \bibinfo{note}{arXiv:2510.11368v4. Preprint; version revised 5 March 2026}.
\bibitem[{Shaw and Wagelmans(1998)}]{shaw1998}
\bibinfo{author}{Shaw, D.X.}, \bibinfo{author}{Wagelmans, A.P.M.},
  \bibinfo{year}{1998}.
\newblock \bibinfo{title}{An algorithm for single-item capacitated economic lot
  sizing with piecewise linear production costs and general holding costs}.
\newblock \bibinfo{journal}{Management Science} \bibinfo{volume}{44},
  \bibinfo{pages}{831--838}.
\newblock \DOIprefix\doi{10.1287/mnsc.44.6.831}.
\bibitem[{Swoveland(1975)}]{swoveland1975}
\bibinfo{author}{Swoveland, C.}, \bibinfo{year}{1975}.
\newblock \bibinfo{title}{A deterministic multi-period production planning
  model with piecewise concave production and holding-backorder costs}.
\newblock \bibinfo{journal}{Management Science} \bibinfo{volume}{21},
  \bibinfo{pages}{1007--1013}.
\newblock \DOIprefix\doi{10.1287/mnsc.21.9.1007}.
\bibitem[{Wagner and Whitin(1958)}]{wagner1958}
\bibinfo{author}{Wagner, H.M.}, \bibinfo{author}{Whitin, T.M.},
  \bibinfo{year}{1958}.
\newblock \bibinfo{title}{Dynamic version of the economic lot size model}.
\newblock \bibinfo{journal}{Management Science} \bibinfo{volume}{5},
  \bibinfo{pages}{89--96}.
\newblock \DOIprefix\doi{10.1287/mnsc.5.1.89}.

\end{thebibliography}
\end{document}